\newcommand{\commitgen}{{\sf CKeyGen}}
\newcommand{\commitkey}{{\sf ctk}}
\newcommand{\commit}{{\sf Com}}
\newcommand{\commitment}{\mathsf{cm}}
\newcommand{\pact}{{\sf PACT}}
\newcommand{\pactsetup}{{\sf PACT.Setup}}
\newcommand{\pactkeygen}{{\sf PACT.KeyGen}}
\newcommand{\pactmint}{{\sf PACT.Mint}}
\newcommand{\pactspend}{{\sf PACT.Spend}}
\newcommand{\pactconsent}{{\sf PACT.Endorse}}
\newcommand{\pactverify}{{\sf PACT.Verify}}
\newcommand{\pactpk}{\mathsf{pk}}
\newcommand{\pactaux}{\mathsf{aux}}
\newcommand{\pactpks}{\mathcal{PK}} 
\newcommand{\pactpklist}{\mathcal{L}_{\sf{PK}}}
\newcommand{\pactsk}{\mathsf{sk}} 
\newcommand{\pactproofs}{\mathsf{\pi}} 
\newcommand{\pactpolicy}{\mathsf{Pol}} 
\newcommand{\pacttx}{\mathsf{tx}}
\newcommand{\pacttxs}{\mathcal{T}}
\newcommand{\pacttxlist}{\mathcal{L}_{\sf{T}}}
\newcommand{\pactcoinvalue}{\mathsf{v}} 
\newcommand{\pactvalue}{\mathcal{V}} 
\newcommand{\coinrand}{\mathsf{r}} 
\newcommand{\cell}{\mathsf{Cell}} 
\newcommand{\token}{\mathsf{tk}} 
\newcommand{\dataend}{\mathsf{En}}
\newcommand{\extract}{\mathsf{extract}} 
\newcommand{\broadcastfunction}{\sf{BCast.GetProofs}} 
\newcommand{\fillasset}{\sf{P.GetEndorsement}} 
\newcommand{\expaccountlist}{\mathcal{L}_{\sf{ACC}}}
\newcommand{\expcorruptlist}{\mathcal{L}_{\sf{C}}}
\newcommand{\exptxlist}{\mathcal{L}_{\sf{C}_{TX}}}
\newcommand{\expaddacc}{{\sf Exp.AddAcc}}
\newcommand{\expspend}{{\sf Exp.Spend}}
\newcommand{\expcorrupt}{{\sf Exp.Corrupt}}
\newcommand{\exppolicy}{{\sf Exp.Policy}}
\newcommand{\expledger}{{\sf Exp.Latest}}
\newcommand{\exppost}{{\sf Exp.PostTx}}
\newcommand{\tab}{\ \ }
\newcommand{\group}{{\mathbb{G}}}
\newcommand{\grouporder}{{\mathbb{Z}_P}}
\newcommand{\leftbrc}{{\textnormal{(}}}
\newcommand{\rightbrc}{{\textnormal{)}}}
\newcommand{\ground}{{\perp}}
\renewcommand{\oracle}{\mathcal{O}}
\newcommand{\ret}{\sf{return\ }}
\newcommand{\msg}{{\sf m}}
\newcommand{\publicpar}{\sf{pp}} 
\newcommand{\Adv}{\mathbf{Adv}} 
\newcommand{\stat}{\mathsf{stat}}
\newcommand{\opone}{\star} 
\newcommand{\optwo}{\circ} 
\newcommand{\zkpok}{\sf{ZKPoK}} 
\newcommand{\nizksetup}{\sf{ZKSetup}}
\newcommand{\nizkprove}{\sf{ZKProve}} 
\newcommand{\nizkverify}{\sf{ZKVerify}} 
\newcommand{\nizkproof}{\pi_{\sf{ZKP}}} 
\newcommand{\crs}{\sf{crs}} 
\newcommand{\stmt}{\sf{stmt}} 
\newcommand{\wit}{\sf{wit}} 
\newcommand{\challengespace}{\mathcal{C}} 
\newcommand{\lang}{\mathcal{L}}
\newcommand{\relation}{\mathcal{R}} 
\newcommand{\positive}{2^N} 
\newcommand{\consistency}{\mathsf{C}} 
\newcommand{\equivalence}{\mathsf{EQ}} 
\newcommand{\asset}{\mathsf{A}} 
\newcommand{\balance}{\mathsf{BA}} 
\newcommand{\assetset}{{\mathcal{AS}}}
\newcommand{\hidingadv}{{\sf Hid_{}^{\adv}}(\secpar)}
\newcommand{\hidingadvped}{{\sf Hid_{HCOM}^{\adv}}(\secpar)}
\newcommand{\bindingadv}{{\sf Bind_{}^{\adv}}(\secpar)}
\newcommand{\ExpAnon}{{\sf ANON}}
\newcommand{\ExpAnonFull}{{\sf ANON}^{\adv}}
\newcommand{\AdvAnon}{\Adv_{\sf \ExpAnon, \pact}}
\newcommand{\ExpBalance}{{\sf Integrity}}
\newcommand{\ExpBalanceFull}{{\sf Integrity}^{\adv}}
\newcommand{\AdvBalance}{\Adv_{\sf \ExpBalance, \pact}}
\title{Private, Auditable, and Distributed Ledger for Financial Institutes}
\author{Shaltiel Eloul$^*$ \and  Yash Satsangi$^*$\and  Yeoh Wei Zhu$^*$ \and Omar Amer \and Georgios Papadopoulos \and  Marco Pistoia}
\institute{Global-Technology Applied-Research, JP Morgan Chase, London UK} 
\begin{document}

\maketitle
\def\thefootnote{*}\footnotetext{These authors contributed equally to this work}

\begin{abstract}
Distributed ledger technology offers several advantages for banking and finance industry, including efficient transaction processing and cross-party transaction reconciliation. The key challenges for adoption of this technology in financial institutes are (a) the building of a privacy-preserving ledger, (b) supporting auditing and regulatory requirements, and (c) flexibility to adapt to complex use-cases with multiple digital assets and actors. This paper proposes a framework\def\thefootnote{\textdagger}\footnote{\href{https://github.com/jpmorganchase/PADL}{\underline{PADL-source-code}} 
}.   for a private, audit-able, and distributed ledger (PADL) that adapts easily to fundamental use-cases within financial institutes
PADL employs widely-used cryptography schemes combined with zero-knowledge proofs to propose a transaction scheme for a `table' like ledger. It enables fast confidential peer-to-peer multi-asset transactions, and transaction graph anonymity, in a no-trust setup, but with customized privacy. We prove that integrity and anonymity of PADL is secured against a strong threat model. Furthermore, we showcase three fundamental real-life use-cases, namely, an assets exchange ledger, a settlement ledger, and a bond market ledger. Based on these use-cases we show that PADL supports smooth-lined inter-assets auditing while preserving privacy of the participants. For example, we show how a bank can be audited for its liquidity or credit risk without violation of privacy of itself or any other party, or how can PADL ensures honest coupon rate payment in bond market without sharing investors values. Finally, our evaluation shows PADL's advantage in performance against previous relevant schemes. 
\keywords{Distributed Ledger, Privacy, De-Fi, Zero-knowledge Proofs, Smart Contract, Blockchain, Multi-assets Ledger
}
\def\thefootnote{\arabic{footnote}}
\end{abstract}

\section{Introduction}
Blockchain and distributed ledgers technology has opened-up a new form of financial interaction, to transact with no requirement of mediator bodies. The impact is clearly evident in the continuous growth of public blockchain domains such as Bitcoin \cite{nakamoto2008}, Ethereum \cite{wood2014ethereum} and other ledgers based on smart contract platforms, notably HyperLedger~\cite{androulaki2018hyperledger}. 
For financial industry, which is originally centralized, adapting distributed ledger technology potentially offers several advantages, such as reducing efforts involved in settlement and post-trades processing, streamlining laborious auditing processes, shifting to cloud environments, and enabling nearly instant approval of transactions. However, the current public blockchain ecosystem does not necessarily match the needs of financial institutions that must comply with government regulations and laws. A specific example is that of privacy: public blockchains requires that all participants have access to the transaction for its validation. This is unacceptable (or even illegal) to majority of institutions that require their clients' personal information or trading strategy be kept private and safe.

In recent years the focus on `private' blockchains has increased to address concerns on privacy in public blockchain, which are considered only pseudo-anonymous.
Privacy may be enhanced on blockchain by encrypting or anonymizing certain information, for example, the transaction's values, participants addresses, type of assets, or transaction graph. Inevitably though, encryption of information adds to the complexity of transaction validation and auditing. While encryption may help with enabling private transaction, it makes it nearly impossible for financial institutions to meet trading requirements or answer auditor's queries without revealing sensitive information.

Zero-knowledge proofs (ZKP) introduced as interactive~\cite{goldwasser2019knowledge} or non-interactive protocols (NIZK)~\cite{blum2019non} between a \textit{prover} and \textit{verifier} to convince the truth of a statement without revealing any further information. This technology can be used in private ledger to validate integrity of transactions and answer auditor queries without revealing sensitive information. 
ZKP systems have seen rapid advancements, particularly in terms of efficiency \cite{groth2010short}, succinctness, and general applicability. These developments build upon foundational frameworks such as zk-SNARKs~\cite{sasson2014zksnarks} and extend to transparent systems that do not require a trusted setup, such as zk-STARKs~\cite{sasson2018zkstarks}, Ligero~\cite{bhadauria2020ligero++,ames2017ligero}, and Bulletproofs~\cite{SP:BBBPWM18}. One of the primary advantages of modern proof systems is their ability to produce short proofs and facilitate the batching verification of range proofs, which demonstrate that a hidden value lies within a specified interval.
Evidently, these frameworks can be combined with public blockchains to enhance the privacy of these blockchains with ZKP~\cite{sasson2014zerocash,sun2021survey}. 
However current efforts focused on public blockchains fail to address institutions or enterprise blockchain applications that have a different set of auditing and privacy requirements or constraints that are not the focus of public crypto-assets. 
In practice, it is also recognized that adapting to new cryptography tools may require far more time in a regulated environment where security vulnerability cannot be compromised. Finally, financial industry use-cases involve multiple actors and multiple assets or asset-type in the trade or transaction. Existing public blockchain hardly address the issue of maintaining or exchanging multiple assets, not to mention, inter-asset auditing or customizing privacy. Hence, a private distributed ledger for financial institutions must be considerate of the following requirements: 

\textbf{Auditing.} Auditing inevitably requires disclosing information to assess and limit the financial and security risks institutes possess. Auditing in a private ledger can be divided into two types. First is privacy preserving auditing which does not require violating privacy of other participants (if they are not directly involved). In this optimal private auditing, a $prover$ shares a ZKP to convince an auditor with the information encrypted to ensure; traceability, transaction integrity, or regulating trading requirements. This privacy preserving auditing would open-up in future the capability to audit financial institutes without laborious efforts or book disclosure, for example, maintaining level of capital at risk in a confidential multi-assets ledgers as we show here.  
Additionally, some regulations require opening all values. Hence `full' auditing in essence, letting only a specified party, the capability to open an encrypted value. The requirement here is that this type of auditing should be limited to the specified party. We later discuss the case of a `settlement bank', where tracing values can be done by a settlement bank for `full' auditing.

\textbf{Private Multi-Asset Ledger.} Public blockchain ledgers, in their original implementation use anonymous accounts,  but values, transactions graphs, balances, and assets are public. By private ledger, we consider this information fully or partially to be hidden with the combination of encryption and zero-knowledge proofs to create private blockchains. Note, that anonymity includes also the hiding of the transaction graph and their assets. Transaction graph can be an extremely resourceful information \cite{alarab2020competence,zhi2022ledgit}. From trading perspective, even a small history of transaction can be sufficient to reveal its strategy or position, however, from auditing perspective, it is still required to maintain traceability, either in privacy preserving manner, with using ZKP, or with decryption of information by an auditor.
Furthermore, majority of existing financial use-cases involve multiple assets and multiple actors. For example, existing trading applications support multiple equities, exchanges, loans or other complex financial instruments. Hence, confidential multi-asset and audit-able privacy between assets is a necessity in order to handle practical use-cases within financial institutions. 

\textbf{No-trust Setup.} No trusted setup is highly desired in the context of external auditing using ZKP. Some distributed ledgers may offer auditing capabilities with the help of a trusted entity such as third-party auditors. This indeed does not provide data protection. Moreover, trusted setup or shared setup, using original zk-SNARK systems for example, would require the auditor to trust the setup. Hence, regulatory auditing would require transparent proofs, i.e. without verification or proving keys. Even if the auditor holds or trusts the holders of the keys, it makes auditing an issue when inter-banking applications may have several and different auditing bodies, where sharing the keys with the bodies, compromises the privacy of the data. 
The trusted setup, in the auditing context, is hence highly undesirable to most financial institutions as a pre-requirement.

In this paper, we propose PADL which follows a `table' ledger scheme and combines encrypted commitments with audit-tokens and NIZK to support private peer-to-peer multi-asset transaction. It supports privacy preserving auditing of inter-asset balances to prove rates and liquidity risks of financial institutions with no trusted setup, and supports full auditing for provable traceability of transactions. We show that with PADL framework, it is straight-forward to build financial market ledger such as bond markets, confidential asset swap, settlement bank with customized privacy.
The transaction scheme enables asynchronous proof generations amongst the participants, and provides a general transaction porpuse for various types of multi-asset transactions.
Furthermore, PADL is an open-source package, compatible with smart-contract to showcase its integration to various blockchain integration. Finally, our performance evaluation shows that PADL, is not only more flexible, but outperforms other private `table' ledgers without sacrificing privacy or auditing capabilities in any form. 
To summarize, our contributions are as follows:
\begin{itemize}
    \item \textbf{PADL framework.} We propose a complete and industry-ready distributed ledger for private and anonymous multi-asset transactions while supporting privacy-preserving and full auditing.
    \item \textbf{New privacy-preserving audits notions.} We identified several core privacy preserving auditing proofs that enables operational risk control and are crucial for financial institution operations. 
    \item \textbf{Evaluations.} We provide a comprehensive theoretical and empirical evaluation of PADL. We evaluate the security and privacy properties, and show its practicality via performance evaluation.
    \item \textbf{Usecase Showcase.} We showcase practical use-cases for financial institution ledgers that are enabled with PADL, thereby demonstrating the capability of  PADL in constructing complex transaction scheme for financial applications.

\end{itemize}

\section{Related Work}
Our work is closest to zkLedger, that is also a table based private distributed ledger that combines NIZK based on $\Sigma$-protocols \cite{cramer1996modular} with Pedersen commitment to propose a private and audit-able transaction scheme. However, zkLedger is limited to single non-private asset and thus it cannot support complex multi-asset applications that are common within finance industry. PADL's flexible transaction scheme supports confidential multi-asset transactions with inter-asset auditing, making it possible to use for different use-cases such as asset exchange, trading loan ledger, settlement ledger, bond market, or secondary markets. Not only this, as shown in our evaluation, PADL transaction scheme scales better with number of assets and participants compared to zkLedger's transaction scheme.

Other private ledgers that offer privacy and auditing include Solidus \cite{cecchetti2017solidus}, PEReDi \cite{aggelos2022peredi} Miniledger 
\cite{chatzigiannis2021miniledger}, Fabzk \cite{kang2019fabzk}, Azeroth \cite{jeong2023azeroth}, etc. Some of these ledger do not necessarily permit auditing while maintaining privacy \cite{gao2019private}, 
or that their auditing is possible only if the ledger is open to public or to a trusted third party, or a fixed auditor \cite{Androulaki2019}. Furthermore, these ledgers are focussed and restricted to single asset transactions and auditing.
Similarly, Platypus shows an `e-cash' scheme for meeting regulations under privacy~\cite{Wust2021}, but with the trust of a central bank. 
This can be a significant limitation of privacy, since auditor in such cases might be able to access private information related to banks and entities that are not the target of the audit or have different regulatory requirements.
Miniledger \cite{chatzigiannis2021miniledger}, simply presents a different implementation of the zkLedger whilst losing some capabilities in order to simplify it, but does not offer additional privacy value. PEReDi \cite{aggelos2022peredi} does not support multi-asset swaps and/or offline transactions.
Many privacy preserving protocols based on membership proof, such as on Monero \cite{noether2016ring} and ZCash \cite{sasson2014zerocash} are intended for efficiency in much larger number of participants than in the case of enterprise blockchain. Therefore, they are concerned mostly with single asset/currency and (pseudo) anonymity without auditing or without multi-assets capability. However recent works investigate and offer Monero protocols with tractability~\cite{li2019traceable}. Here, we show that in a `table' data structure, it is straight-forward to deal with privacy preserving or full auditing, and to introduce auditing between confidential assets. Similarly, ZCash which is based on succinct schemes,  is also being studied for the propose of regulation enforcement protocols~\cite{xu2023regulation,li2019toward,xu2023regulation}, or for the ability to introduce multi-assets for ZCash ~\cite{Ding2020a,Xu2020}. These ledgers and others based on zk-SNARK \cite{Ding2020a,Jia2024,Xu2020,Garman2016,Wust2021}, however do not focus on enterprise blockchains use-cases where trust setup as with original zk-SNARK implementations is not practical for auditing context. 
Finally, there are also applications of privacy preserving ledgers that are in this work context, for example, \cite{khattak2020dynamic,meralli2020} for dynamic pricing and securities market, respectively. We extend auditing capabilities, and add to these set of applications by applying PADL to private atomic swap exchanges and bond markets in a single framework and general transaction scheme.

The rest of the paper is structured as follows. Sec. \ref{methods} describes the PADL design and the transaction scheme informally. This is immediately followed by the application use-cases in Sec. \ref{ch:usecases}. Sec. \ref{sec:padlformal} describes formal treatment of our PADL followed by its security and performance analysis in Sec. \ref{sec:eval}. For supplementary, we add appendix providing detailed examples of transactions (\ref{appendix:usecases}), and further security analysis with complementary proofs (\ref{appendix:security},D).

\section{PADL - Ledger Design}
\label{methods}

This section describes PADL design and the proposed private transaction scheme informally. We postpone the formal definition of PADL transaction scheme to Sec. \ref{sec:padlformal} but for clarity of this section, introduce essential components; Let $\mathbb{G}$ be a cyclic group and let $\mathrm{g}, \mathrm{h}$ be two random generators of $\mathbb{G}$. Then in the setup of a ledger, each participant generates its public key, $\pactpk$ which is related to its secret key $\pactsk$ via the following relation: $\pactpk = \mathrm{h}^{\pactsk}$.

\begin{figure}[h!]
  \begin{tabular}[c]{@{}c@{}}
    \begin{subfigure}[c]{.37\linewidth}
      \centering
      \includegraphics[width=1.1\linewidth]{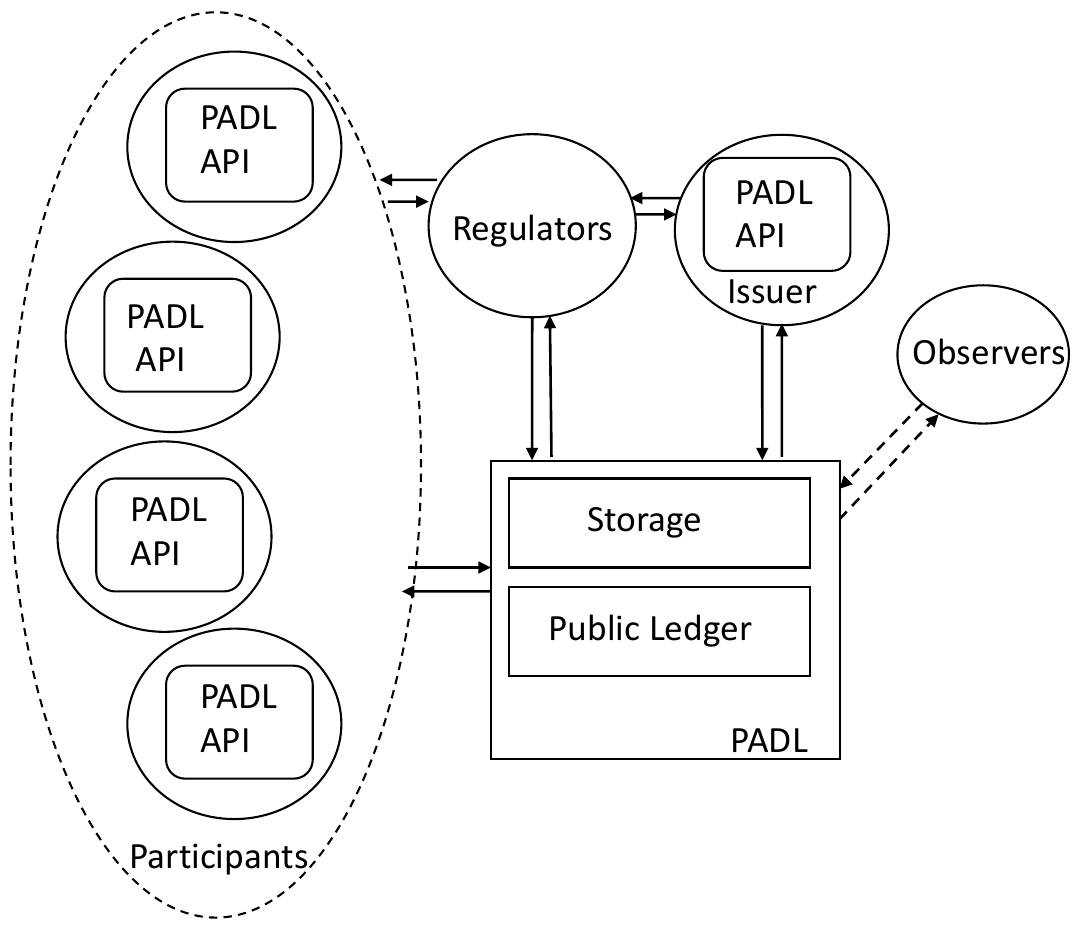}%
      \caption
        {%
          PADL ledger service setup.%
          \label{fig:upper}%
        }%
    \end{subfigure}\\
    \noalign{\bigskip}%
    \begin{subfigure}[c]{.4\linewidth}
      \centering
      \includegraphics[width=1.1\linewidth,page=2]{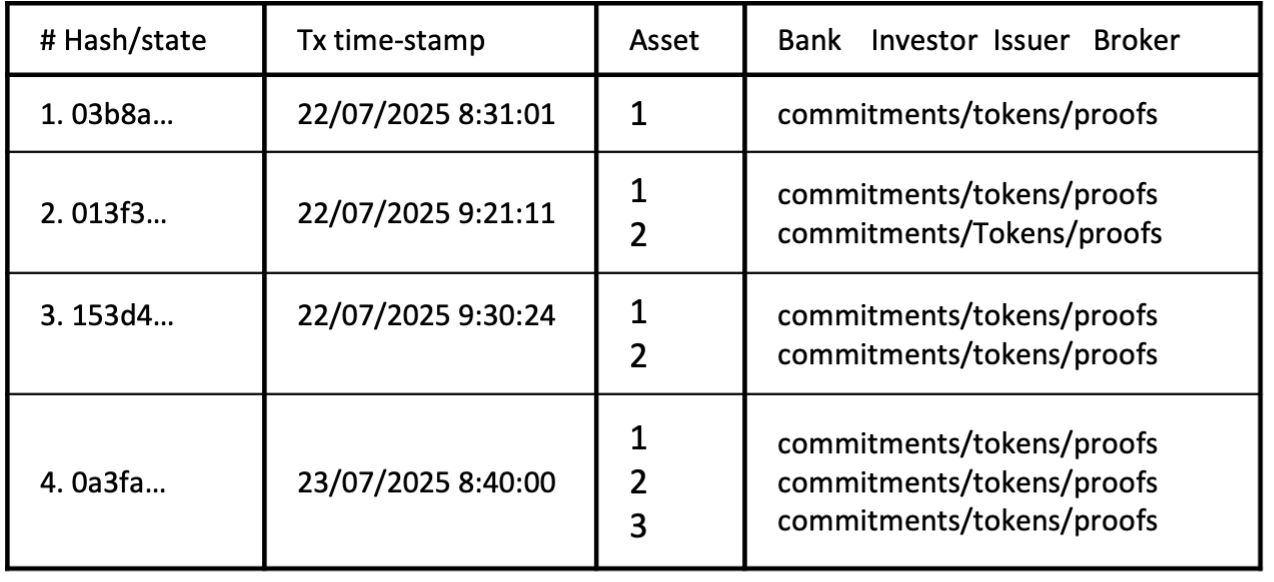}%
      \caption
        {%
          3D 'table' with indices of transactions {$t$} participants ($p$) and assets ($a$).%
          \label{fig:lower}%
        }%
    \end{subfigure}
  \end{tabular}
    \begin{subfigure}[c]{.60\linewidth}
    \centering
    \includegraphics[width=0.9\linewidth]{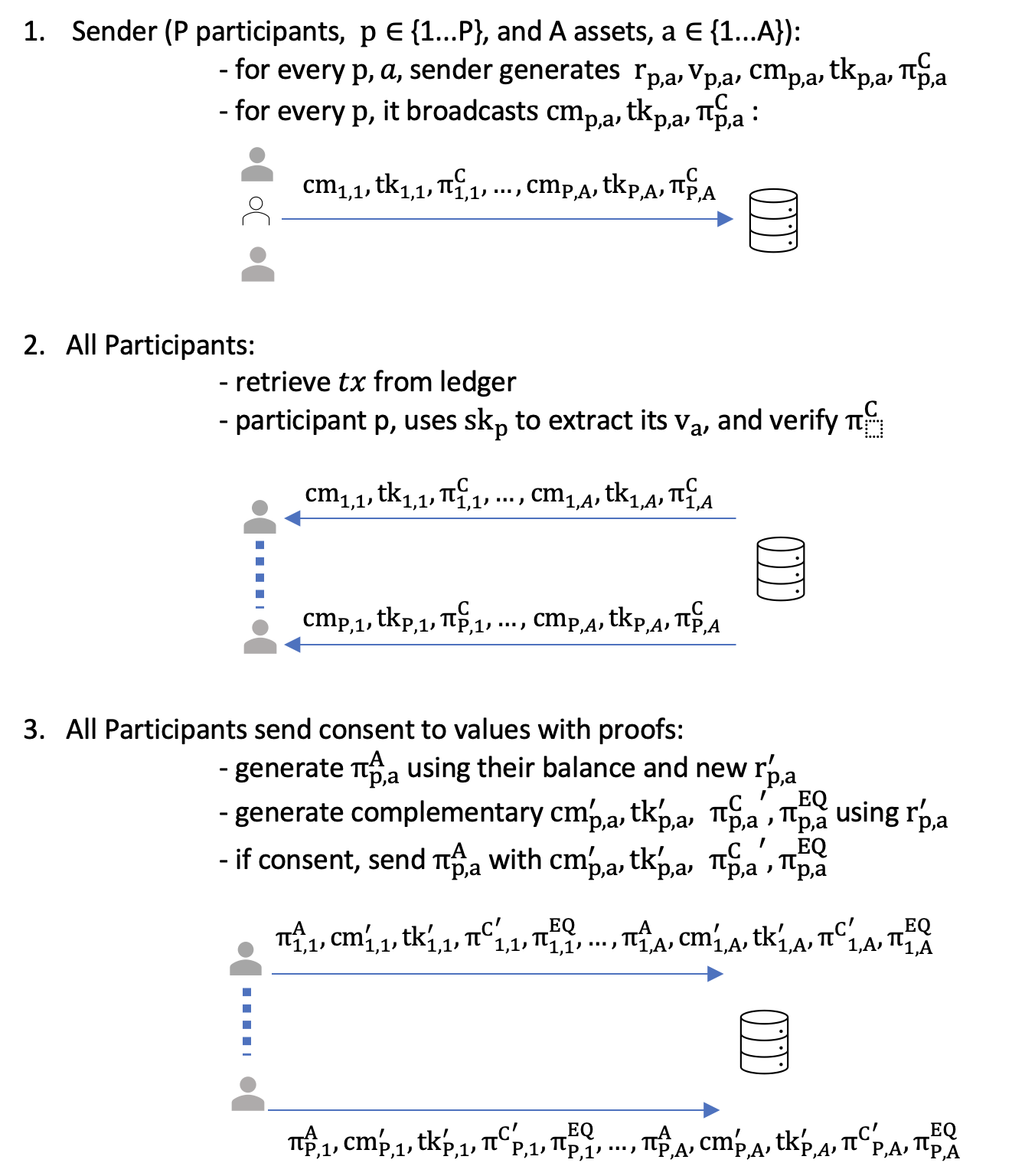}%
    \caption
      {%
        Transaction scheme and communication.%
        \label{fig:big}%
      }%
  \end{subfigure}\hfill
  \caption
    {%
      PADL high-level design of transaction and a `table' ledger.%
      \label{fig:every}%
    }
\end{figure}
The PADL structure is an `append-only' transaction ledger, where each transaction is added to a table in a synchronized manner. Figure \ref{fig:every} illustrates the different actors that can interact with the distributed ledger (a), the information stored in the encrypted ledger table (b) and high-level description of the transaction scheme (c). Asynchronous transactions is also possible in various ways, but out of scope at this stage.
A transaction is sent to the host of the ledger by the `sender'. The sender knows and hides all participants' values for a specific transaction. A multi-asset transaction, consisting of one or more assets, represents one row that may have several sub-rows depending on the number of assets in the transaction. Each transaction can possess multiple confidential assets, so the assets and their amounts in the transactions are encrypted using Pedersen commitment for each participant to hide their values. The commitments are accompanied with tokens and the proofs required to validate the integrity of the transaction scheme.
Thus, for each transaction $t$, each participant $p$, and each asset $a$,  we associate a cell of a 3-dimensional and dynamical array, $\cell_{t,p,a}$, where $t,p,a$ are the indices of the transaction, participant, and asset, respectively: 

\begin{equation*}
\cell_{t,p,a} = \{\commitment_{t,p,a}, \token_{t,p,a}, \commitment'_{t,p,a}, \token'_{t,p,a}, \pactproofs^{\asset}_{t,p,a}, \pactproofs^{\consistency}_{t,p,a}, \pactproofs^{\consistency'}_{t,p,a}, \pactproofs^{\equivalence}_{t,p,a} \} 
\end{equation*}

In the above $\cell_{t,p,a}$, $\commitment_{t,p,a}$ is Pedersen commitment, defined by 
$\commitment_{t,p,a} = \mathrm{g}^{\pactcoinvalue_{t,p,a}} \mathrm{h}^{\coinrand_{t,p,a}}$. Here $\pactcoinvalue_{t,p,a}$ is the value participant wishes to hide and $\coinrand_{t,p,a}$ is a random blinding factor. The commitment is associated with a blinding factor as a token, $\token_{t,p,a}=(\pactpk_p)^{\coinrand_{t,p,a}}$.  In order to prove the binding of the blinding factor in the token to the commitment, a proof of `consistency' is always accompanied here ($\pactproofs^C_{t,p,a}$). To make sure that no asset is created or destroyed in a transaction, it is required that the product of commitments across participants for each asset is equal to 1, where $\forall\ a,t, \Pi_{p}\commitment_{t,p,a} = 1 \ $. This ensures that the sum of blinding factors and the sum of values are equal to zero, $\sum_{p}\pactcoinvalue_{t,p,a} = 0$ and $\forall \ a,t, \sum_{p}\coinrand_{t,p,a} = 0 \ $. 
Table \ref{table:padl_algo} provides the algorithms for PADL construction, and the generation of commits and blinding factors is detailed in the algorithm $\pactspend$.

Additionally, a proof of asset is required to validate that no participant overspends, $\pi^A_{t,p,a}$. In order for each participant to provide a proof of asset, we introduce a complementary commit $\commitment'_{t,p,a}$, a complementary token $\token'_{t,p,a}$, and proofs. The complimentary commit is a commitment to the same value but with a different blinding factor $\coinrand'$ and the token is the public key raised to $\coinrand'$. The complimentary commits are required since the participants do not know the blinding factor in the original commitment, hence they re-commit their values. Finally, each participant starts with an initial cell for each asset that can be minted by the participants themselves and/or verified by other parties (Def. \ref{definition:pact}).

\subsubsection{PADL Transaction.}
\label{txscheme}
The transaction scheme is designed for general purpose, confidential multi-assets transactions (Fig.~\ref{fig:every}c). A formal definition of the transaction scheme (Def. \ref{definition:pact}) is provided later at Sec. \ref{sec:padlformal}. Before the broadcasting step, a sender generates commitment-token pairs and proofs of consistency, for every asset and participant. It also hashes the commitment-token pairs $H(\commitment,\token)$ to provide a unique identification of the transaction information. The transaction is then broadcast, without the proof of assets, allowing negative values for other participants, e.g. a broker, or in `atomic' asset exchange. Subsequently, it also does not reveal who is the sender (traditionally known as spender), i.e. the sender is not the only participant who can commit to negative value (sends asset). After broadcasting the transaction, each participant, can extract the values committed, $\extract(\commitment_{t,p,a},\token_{t,p,a},\pactsk_{p})$ using their secret keys by calculating $\commitment_{t,p,a}^{\pactsk_p}\cdot \token_{t,p,a}^{-1}=g^{{\pactsk_p}^{\pactcoinvalue_{t,p,a}}}$ and use brute-force to solve the $\mathrm{dlog_{g^{\pactsk_p}} g^{{\pactsk_p}^{\pactcoinvalue_{t,p,a}}}}=\pactcoinvalue_{t,p,a}$. Only the holder of the secret key can extract and be convinced with its value. This extraction is possible in practical cases where value $\pactcoinvalue_{t,p,a}$ belongs to a finite small set, and within a maximum range.

Next, the participants fill their proof of assets. Since $\coinrand_{t,p,a}$ is not known for the participant, participants provide also a complementary commit($\commitment'_{t,p,a}$), complementary token ($\token'_{t,p,a}$) and complementary consistency proof ($\pactproofs^{\consistency'}_{t,p,a}$) for a new private $\coinrand'_{t,p,a}$.  In order to verify that the same $\pactcoinvalue$ is used, the participant also generates an equivalence proof ($\pactproofs^{\equivalence}$). This proof is also sufficient to verify the identity, as it requires knowledge of its secret key, this proof is detailed in Sec. \ref{sec:padlformal}. The participant then adds a proof of asset $\pi^A$ for proving the sufficient balance in asset $a$ with the product of previous commits, $\prod_{t} \commitment'_{t,p,a}$. 

This scheme has several advantages. First, the proving system is simpler than previously suggested~\cite{narula2018zkLedger,chatzigiannis2021miniledger}, since it does not require generating the conditions of using a disjunctive proof.

Second and more importantly, it enables a single exchange with multi-assets, by committing minus values for other participants in a transaction. This offers practical flexibility when applying PADL to different use-cases where a trade-off between privacy, performance and future commitments/contracts is needed. Third, it hides the sender from the broadcaster or from the ledger. Fourth, generating the proof of asset is computationally expensive and likely the limiting step, and in this scheme, the participants generate the proof for themselves, asynchronously.
\subsubsection{Extension to `injective tx'.}
PADL transaction scheme requires other participants besides the sender to interact with the ledger to fill up the proof of assets. However, this is not necessarily a bottleneck since on any encrypted ledger, participants would need to consent to an encrypted value, even if it is a positive value, e.g. in loan assets, or where assets accrued interest. With that, our scheme can be easily changed to the simple case of an `injective' transaction where consent is not required. In such case, the sender would provide range proofs to all transaction values to be positive, and proof of asset for itself. Along with another disjunctive proofs for everyone to conceal the sender~\cite{narula2018zkLedger}. Note that in any case the validation of proofs is required.

\subsubsection{Validation and extension to dropping parties.}
In order to validate transactions, each participant can verify the ZKP proofs at its own local node or machine, and also validate the hash declared in the transaction to ensure the commits and tokens corresponds to the number of participants and assets cells seen. An API can approve or reject the transaction. Alternatively, verification of the proofs for all broadcast cells can be obtained by a centralized service or with smart contract (which is also implemented in the PADL code). 
In case that some proof of assets are not valid, or not filled due to dropping parties,
the sender can exclude these participants ($O\subset P$) and the transaction can be still validated. The exclusion is simply done by an extra step. The sender($p_s$), regenerates its own cell with a modified blinding factor, $\coinrand_{p_{s}}$, that is the sum of the excluded participants and itself, $\coinrand_{t,p_{s},a}=\coinrand^{org}_{t,b,a}+\sum_{o \in O}{\coinrand_{t,o,a}}$.
That makes the proof of balance valid again. In fact, the sender is able to remove any participant as long as the commit value is 0. Otherwise, the proof of balance would fail. The trade-off here, is anonymity of the transaction graph. The less participants involved in the transaction, the more information on the transaction graph can be extracted. 
The security of anonymity and integrity of the transaction scheme is defined in Sec. \ref{sec:eval} and formally analyzed in Appendix \ref{appendix:security}.

\section{Use-Cases}
\label{ch:usecases}
We discuss several fundamental use cases for transactions and trading applications where PADL can provide privacy with audit-ability. Other uses cases can be built on top of the ledger components here. Concrete examples of transaction generation are also shown in Appendix \ref{appendix:usecases}.  
\subsection{Simple exchange ledger}
\label{simple_exchange_ledger}
We describe how PADL can be seamlessly used for exchange of confidential assets atomically. Assuming that two participants would like to exchange assets, with PADL it can be done in a single transaction. To do so, participant A creates a transaction consisting of commitments and tokens committing to transferring $x$ amount of asset $a$ to participant B. In the same transaction, participant A also writes a commitment on behalf of participant B, committing participant B to $y$ amount of $b$. Then, participant A can broadcast this transaction, participant B can decide to either accept or reject this transaction. 
To accept the transaction, participant B simply writes its Cell's proofs, which is then finalized and verified by participant A and the transaction is appended to the ledger. Note that by adding their proof of assets, they consent to the exchange. Next, the transaction is appended if validation passes. In the validation, the ZKPs are verified and the immutability of original broadcast transaction is checked by checking the hash of the final transaction. Finally, if a participants provides a `wrong' or `bad' cell in the transaction, it simply results in rejection of the transaction. 

\subsection{Settlement Trusted Bank ledger}
\label{settlement_ledger_trusted}
Currently, bank payment systems are required a trusted party in order to be audit-able and follows `Know-Your-Costumer', Anti-Money Laundering, specific rules, or keys recovery criteria. In such cases, the ledger contains a party that acts as a settlement bank, and would have access to the balance of some participants ('full' auditing). In a distributed solution, we wish to avoid key management by the trusted parties, meaning parties would not share their private keys with any party. This scenario can be treated in PADL, by an additional token, $\token^{\textit{I}}$; signing $\coinrand_{t,p,a}$ by the public key of the trusted party. For many cases, the issuer can be defined as the `auditor' or `trusted' party,  where its participant index is $p=I$. The transaction cell structure is altered to facilitate this requirement:
\begin{equation*}
\cell_{t,p,a} = \{\commitment_{t,p,a}, \token_{t,p,a}, \commitment'_{t,p,a}, \token'_{t,p,a}, \pactproofs^{\asset}_{t,p,a}, \pactproofs^{\consistency}_{t,p,a}, \pactproofs^{\consistency'}_{t,p,a}, \pactproofs^{\equivalence}_{t,p,a}, \token^\textit{I}_{t,p,a},  \pactproofs^{\consistency^\textit{I}}_{t,p,a} \} 
\end{equation*}
Note that we also need to add consistency proof for the new added token. However, if in the committee consensus, the issuer is also designated as the approving entity (no need for consensus), the Cell structure can be reduced to:
\begin{equation*}
\cell_{t,p,a} = \{\commitment_{t,p,a}, \token_{t,p,a}, \pactproofs^{\consistency}_{t,p,a}, \token^\textit{I}_{t,p,a},  \pactproofs^{\consistency^\textit{I}}_{t,p,a} \} 
\end{equation*}
The issuer can also check for any asset balance, by extracting the balance for each asset itself, and proof of asset can be avoided. This accelerates largely the ZKP system for a very typical and realistic use-case. In addition since, the token is designated to a single cell, it is flexible to define many audit tokens for a list of parties ($\pactpk$s) and for any cell in a transaction. 

\subsection{Bond Market ledger}
\label{bond_market_ledger}
In this subsection we describe the application of PADL to a third and a more complex use case of trading bonds. The application of blockchain in the debt capital markets with arrival of `smart bonds' is highly anticipated. However, in many cases, the bond issuers or investors might be reluctant to disclose their positions. We show how PADL is applied to make private transaction in the bond markets. For this case, we are mainly interested in exchange of two types of assets, bonds and a USD-backed digital token. The bond issuer would like to issue bonds and borrow USD-coin in return. The investors wish to lend USD-coin to buy bonds, however, the USD-coin for the investors are handled by another party, i.e. custodians. Also, the bonds are handled by brokers with fees, thereby interacting with the bond issuer and the investors. 
The privacy requirements in this scenario are as follows: 
\begin{itemize}
    \item Only the broker knows the details of bond deals.
    \item Custodian knows only the amount of money released to an investor.
    \item The bond issuer does not need to know the individual investors' contribution.
    \item The investors does not know about other investors.
    \item Coupon (or interest rate) payment by the issuer can be issued without the issuer learning about the distribution of the payment amongst the investors.
\end{itemize}
PADL allows the investors and bond issuer to make these transaction while maintaining privacy as described below (in brief). The entire transaction flow is described in Appendix. 

We start by assuming all the participants: issuers, custodians, broker and the bond issuer are on PADL, and a ledger consists of two assets, a bond and a USD tokens. The ledger is initialized such that bonds are minted by the bond issuer and the USD coin asset by the custodians. 
The custodian can then issue USD token to the investors as requested by the investors, by creating a transaction on PADL. In order to exchange assets (buy bond for USD coin), the broker creates a transaction on behalf of both the investor and bond issuer. This transaction consists of commitments that are written in every participant's columns. The bond issuer can open the commitments in  its column to verify that it is committing to $x$ number of bonds in exchange for $y$ number of USD coins. Similarly, the investor can verify that it is committing to USD token in exchange for receiving bonds. Furthermore, the broker creates transactions for the bond issuer to pay the coupons to the investors. Again, the only value the bond issuer can verify is that it pays the correct coupon rate for the money it received for the bonds, however, it cannot decrypt the payments that each of the participants on PADL receive. Each investor however can verify that it receives the right amount of coupon payment as promised. 
Finally once the bond matures, the broker in a single transaction (see example in Appendix \ref{appendix:usecases}), can return the USD coins to the investors while taking its commission away from all participants, and the bond issuer can destroy the bond asset.

The use-cases above shows the applicability of PADL to develop various private and audit-able markets. We next discuss the cryptography and methods which define the transaction scheme, auditing, and ledger construction followed by security analysis.
\section{Methods}
\label{sec:padlformal}
\subsection{Notations and Background}

We denote by $y \leftarrow\adv(x)$ the execution of algorithm $\adv$ on input $x$ and produces $y$. By $\adv^\oracle$, we denote an algorithm $\adv$ that has access to oracle $\oracle$. We use $r \sample S $ to denote that r is sampled uniformly at random from a set S. 
Let $\group$ be $p$ prime-ordered additive subgroup of the elliptic curve $E(\mathbb{Z}_q)$ generated by some generator $g$ and some prime $q$. For simplicity, we use multiplicative notation for group $\group$ throughout the paper. 
By $\commitment_{t, p, a}$, we denote the participant $p$ (with public key $\pactpk_p$) account's commitment coin in the $t$-th transaction row for a specific asset $a$. Similarly, by $\pactcoinvalue_{t, p, a}$, we denote the corresponding commitment coin value. We exploit the notation for 
$\pactcoinvalue_{\pacttx, p, a}$ to mean the same for a specific transaction $\pacttx$. We often omit $a$ if the context is clear. \\
\noindent\textbf{Discrete Log (DLOG) Assumption:} Given $g, g^x$ where $g \in \group$ and $x \in \grouporder$, no PPT adversary can output $x$ with non-negligible probability. This paper relies on the intractability of solving the discrete logarithm problem.

\subsubsection{Commitment Scheme.} A commitment scheme consists of two steps: first the sender commits to a value as a commitment; later the sender may choose to reveal the committed value. A commitment scheme should satisfy the hiding and binding properties. Additionally, a commitment scheme has additive homomorphic property, if given two commitments,
we have $\commit(\commitkey, m_0, \coinrand_0) \cdot \commit(\commitkey, m_1, \coinrand_1) = \commit(\commitkey,m_0+m_1 , \coinrand_0+\coinrand_1)$. 
We briefly recall notions for the Pedersen commitment scheme \cite{C:Pedersen91} while hiding and binding definitions \cite{EC:GroKoh15} are recalled in the Appendix. 
$\commitkey$ is omitted for the rest of the paper since it is clear from the context.

\begin{definition}[Pedersen Commitment]
\label{comdef}
Pedersen (homomorphic) commitment scheme $\sf{(HCOM)}$ consists of the following set of algorithms:
	\begin{itemize}
	\item{ $\commitgen\textnormal{(}\secparam\textnormal{)}$:}  on input a security parameter $\secparam$, this algorithm computes $\ \commitkey :=(g,h) \sample \group$ where $g,h$ are generators and outputs $\commitkey$.
    \item{$\commit\textnormal{(}\commitkey, \msg, \coinrand \textnormal{)}$:} on input a commitment key $\commitkey$, a message $\msg$ and a randomness $\coinrand$, this algorithm parses $(g,h) := \commitkey$
    , and outputs $g^{\msg}h^{\coinrand}$ as the commitment $\commitment$.
	\end{itemize}
	\label{definition:com}
\end{definition}

\subsubsection{Zero-Knowledge Proof.} A non-interactive zero-knowledge proof (NIZK) \cite{EC:GroOstSah06} allows a prover to prove to a verifier of some statement in zero-knowledge. NIZK is defined as follows:

\begin{definition}[Non-interactive Zero-Knowledge Argument System] Let $\relation$ be an NP-relation and $\lang_{\relation}$ be the language defined by $\relation$. A non-interactive zero-knowledge argument system for $\lang_{\relation}$ consists of the following algorithms:
    \begin{description}
    \item[$\nizksetup_{\lang_{\relation}}(\secpar)$:] Takes as input a security parameter $\secpar$, and outputs a common reference string $\crs$.
    \item[$\nizkprove_{\lang_{\relation}}(\crs, \stmt, \wit)$:] Takes as input a common reference string $\crs$, a statement $\stmt$ and a witness $\wit$, and  outputs either a proof $\nizkproof$ or $\ground$.
    \item[$\nizkverify_{\lang_{\relation}}(\crs, \stmt, \nizkproof)$:] Takes as input a common reference string $\crs$,  a statement $\stmt$, and a proof $\nizkproof$,  and outputs either a $0$ or $1$.
    \end{description}
	\label{definition:nizk}
\end{definition}
\label{txscheme}

\subsection{ Privacy Preserving Multi-Asset Transaction with Audit}

In this section, we formally introduce the notion of a confidential transaction with privacy-preserving audit scheme ($\pact$) which is later used to show PADL construction. $\pact$ is based on the syntax from (ring) confidential transaction (Ring-CT) \cite{ESORICS:SALY17,FC:YSLAEZG20} but additionally introduces notions that capture the requirements for a variety of privacy-preserving audits. Modifications are also made to the syntax of confidential transaction scheme, allowing the scheme to better capture the functionality of table-based multi-asset transaction rather than graph-based mono-asset transaction. The security and privacy of the transaction scheme are detailed in Appendix \ref{appendix:security}.

\subsubsection{$\pact$ scheme.}
In general, a $\pact$ scheme consists of algorithms that are used to make confidential transactions. 
It is assumed that the distributed ledger is properly maintained and updated at all time\footnote{A consensus protocol is used to manage the state of the distributed ledger. However, the exact details are considered outside the scope of this work.}. Recipient accounts (accounts where the commitment coin is positive) and spending accounts (accounts where the commitment coin is negative) are implicitly captured by the transaction amount list $\pactvalue$. 
Note that the transaction considered here is a multi-asset transaction. We additionally introduce $\pactconsent$ to capture the transaction flexibility of the proposed PADL scheme, which permits policy-based spending. Policy enforcement is flexible and can include rules such as always accepting positive deposits, only accepting negative amounts if the transaction is initiated by the account owner, only accepting negative amounts if it is an authorized scheduled direct debit, and other possible variations. In practice, the default policy, $\pactpolicy_{G}$ is configured to accept positive amounts and only accept negative amounts if the transaction is initiated by the same account.

\begin{definition}[$\pact$]
A privacy preserving-enabled confidential transaction \\ scheme $\pact$ \text{is a tuple of algorithms}, \\ $\pact =(\pactsetup,\pactkeygen,\allowbreak\pactmint\allowbreak,\allowbreak\pactspend,\pactverify)$: %
	\begin{itemize}
	\item{$\pactsetup\leftbrc\secparam\rightbrc$:} on input a security parameter $\secparam$, this algorithm outputs a public parameter $\publicpar$. All algorithms defined will implicitly receive $\publicpar$ as part of their inputs.
    \item{$\pactkeygen\leftbrc\secparam\rightbrc$:} on input a security parameter $\secparam$, this algorithm outputs an account secret key $\pactsk$ and an account public key $\pactpk$.
    \item{$\pactmint\leftbrc \pactcoinvalue \rightbrc$:} on input a transaction amount $\pactcoinvalue$, this algorithm outputs a commitment $\commitment$ and a commitment blinding factor $\coinrand$.
     \item{$\pactconsent\leftbrc \pactcoinvalue, \pactpolicy, \pactsk, \pactaux \rightbrc$:} on input a transaction amount $\pactcoinvalue$, a transaction policy $\pactpolicy$, an account secret key $\pactsk$, and an auxiliary information $\pactaux$, this algorithm outputs an endorsement data $\dataend$.
     \item{$\pactspend\leftbrc \pactvalue, \pacttxs, \pactpks \rightbrc$:} on input a transaction amount list $\pactvalue$, a history transactions list $\pacttxs$, and a master account public key list $\pactpks$, this algorithm outputs a new transaction $\pacttx$, and a validity proof $\pactproofs$.
    \item{$\pactverify\leftbrc \pacttx, \pactproofs,\pacttxs,\pactpks \rightbrc$:} on input a new transaction $\pacttx$, a validity proof $\pactproofs$, a history transactions list $\pacttxs$, and an account public key list $\pactpks$, this algorithm outputs a verification bit $b \in \{0,1\}$.
	\end{itemize}
	\label{definition:pact}
\end{definition}


In the context of auditing distributed ledgers, it is often required that the auditor can open the transactions and trace their graph to make conclusions about the financial health of the organization. In Provisions \cite{CCS:DBBCB15}, more privacy-preserving auditing is considered whereby solvency is proved without revealing the entire transactions history. However, existing works do not consider a couple of fundamental audit information that is used in financial auditing landscape. We identified the following sets of privacy-preserving auditing concepts that would be of interest to the financial auditing of participants in the distributed ledger. $\pact$ should support the following set of privacy-preserving audit capabilities:
\begin{enumerate}
    \item \textbf{Basic Asset Balance. } Without revealing the transaction history, the verifier should be convinced of the asset balance of the ledger participant.
    \item \textbf{Liquidity. } Without revealing the transaction history and an asset balance, the verifier should be convinced of the liquidity or credit of the asset.
    \item \textbf{Inter-transactions Rate. } Without revealing the transaction history, the verifier should be convinced of the inter-transaction rate between two subset of transactions.
\end{enumerate}


\subsection{Final PADL construction}
\label{sec:padlcon}

In this section, we lay out the cryptography setup and elements involved in the transaction scheme.
As discussed in the preceding sections, $\pactsetup$ is executed to obtain $g,h$ which are assumed to be public in the ledger. Each participant will generate their keypair $\pactsk_p,\pactpk_p$ using $\pactkeygen$ where $\pactpk_p$ will now be the participant's identity and unique account address. 
Transactions with proofs are generated using $\pactspend$ while the transaction value list $\pactvalue$ used could be prepared by any party.
During the generation of a transaction, the sender contacts the broadcaster using $\broadcastfunction$ (in PADL, broadcast is a functionality provided by the ledger) to obtains endorsements from all parties. Meanwhile, the transaction amount is committed using $\pactmint$. Lastly, $\pactverify$ is used to verify the transaction with proofs generated by $\pactspend$ with respect to the latest ledger transaction state and participant list. Each participant maintains its internal state $\stat_p$ which records $\pactsk_p, \pactpk_p, \pactpolicy_p$. 
The construction is summarised in Table \ref{table:padl_algo} and all the elements of PADL's $\cell_{t,p,a}$ are listed as follows (indices $t$, $p$ and $a$ are omitted whenever appropriate for fluency).





\noindent\textbf{Commitment} commits to the value in a transaction as defined in Def. \ref{comdef}: $\commitment := \commit(\pactcoinvalue,\coinrand)$.



\noindent\textbf{Token} hides the blinding factor, $r$ with a public-key,
$\token(\coinrand,\pactpk)=\pactpk^{\coinrand}$.
The token has two distinctive roles: to facilitate verification of proof and to enable extractability of the committed value without knowledge of $\coinrand$, as long as the value belongs to a small set. For example, participant, $p$, can take a commitment $\commitment(\pactcoinvalue,\coinrand)$ and token $\token(\coinrand, \pactpk)$, and use its secret key to compute $\frac{\commitment^{\pactsk}}{\token}$ to obtain $({g^{\pactsk}})^{\pactcoinvalue}$. The extractability is also used in assigning an auditor or a trusted party, as described in the settlement bank use-case.  
Note that the combined system ($\commitment$, $\token$) makes the scheme computationally hiding because adversary could solve the Discrete Log Problem for $\token$ to obtain $\coinrand$ and be able to open the $\commitment$ and proof of consistency ensured that the correct $\coinrand$ is used to construct the token.



\noindent\textbf{Proof of balance}maintains the conservation of all assets in a transaction $\pacttx$, ensuring that the sum of each asset value in the transaction equals zero. Proof of balance is a proof for the language $\lang_{\relation_{\balance}} := \{(\commitment_{\pacttx,1},\commitment_{\pacttx,2},\cdots, \commitment_{\pacttx,|\pactpks|}) : \sum_{p=1}^{|\pactpks|}{\pactcoinvalue_{\pacttx,p}}=0\}$. In PADL, proof of balance is checked by multiplying the commitments together and checked for equality with $1$. By the homomorphic property of Pedersen commitment and assuming $\coinrand_i$ is the additive sharing of zero, we have  $\Pi_{p=1}^{|\pactpks|} \commitment_{\pacttx,p} = g^{\pactcoinvalue_{\pacttx,1} + \pactcoinvalue_{\pacttx,2} + \cdots + \pactcoinvalue_{\pacttx,|\pactpks|}}h^{\coinrand_{\pacttx,1} +\coinrand_{\pacttx} + \cdots + \coinrand_{\pacttx,|\pactpks|}} = g^0h^0 = 1$. The proof of balance is unforgeable due to the hardness of discrete log.
\newline 

\noindent\textbf{Proof of consistency} ($\pactproofs^{\consistency}$) ensures that the randomness used in the commitment and token are consistent. The proof of consistency is a proof for the language $\lang_{\relation_{\consistency}} := \{(\commitment,\token) : \exists \pactcoinvalue,\coinrand\textnormal{ s.t. }\commitment = \textit{g}^{\pactcoinvalue}\textit{h}^{\coinrand} \wedge \token = \pactpk^{\coinrand} \}$. 
The proof of consistency is a ZK-$\Sigma$-protocol where the \textit{prover} chooses random $u_1, u_2 \in \grouporder$, and sets $t_1 \leftarrow g^{u_1}h^{u_2}, \  t_2 \allowdisplaybreaks\leftarrow \pactpk^{u_2}$, and generates a challenge with a hash function $c=H(h,g, t_1,t_2, \commitment, \token) \in \grouporder$. The \textit{prover} calculates $\quad s_1 \leftarrow u_1+c\pactcoinvalue,\quad s_2 \leftarrow u_2+c\coinrand$. 
It then outputs the transcript $(t_1, t_2, s_1, s_2)$. 
The \textit{verifier} verifies: $\{t_1\cdot \commitment^c=g^{s_1}h^{s_2} \wedge t_2\cdot \token^c=h^{s_2} \wedge c=H(h,g, t_1,t_2, \commitment, \token)\}$. 
\newline

\noindent\textbf{Proof of equivalence} ($\pactproofs^{\equivalence}$) is used to show that two commitments have the same commitment value. The proof of equivalence is a proof for the language $\lang_{\relation_{\equivalence}} := \{(\commitment, \commitment') : \exists \pactcoinvalue,\coinrand,\coinrand' \textnormal{ s.t. } \commitment = g^{\pactcoinvalue}h^{\coinrand} \wedge \commitment' = g^{\pactcoinvalue}h^{\coinrand'}\}$.
Given a commitment-token pair ($\commitment,\token$) and their consistency-verified complementary counterpart ($\commitment',\token'$), we can compute $a := \frac{\token}{\token'}=h^{{(\coinrand-\coinrand')}^{\pactsk}}$ and $b:= \frac{\commitment}{\commitment'}=h^{\coinrand-\coinrand'},\quad \mathrm{\,\, assuming} \,\, \pactcoinvalue=\pactcoinvalue'$. If $\pactcoinvalue \neq \pactcoinvalue'$, the prover would need to compute discrete log $x= dlog_{(g^{\pactcoinvalue-\pactcoinvalue'}b)}b^{\pactsk}$ where $b=h^{\coinrand-\coinrand'}$ as the discrete log between $g$ and $h$ is not known.  
Therefore, it is sufficient to check for the knowledge of secret key between the base $b$ and group element $a$ to check for the commitment coin value equivalency. In PADL, Schnorr protocol \cite{C:Schnorr89} for knowledge of discrete log is used. 
Let $\frac{\token}{\token'}=a$ and $\frac{\commitment}{\commitment'}=b$. 
The \textit{prover} chooses random $u \in \grouporder$, sets $t \leftarrow b^{u}$, and generates a challenge with a hash function $c=H(a,b,t) \in \grouporder$. The \textit{prover} calculates $\quad s \leftarrow u+c\cdot \pactsk \in \grouporder$. 
It then outputs the transcript $(c, t, s)$. 
The \textit{verifier} verifies: $\{t\cdot a^c= b^s \wedge c=H(a,b,t)\}$. 
\newline

\noindent\textbf{Proof of asset} ($\pactproofs^{\asset}$) asserts that after summing the commitment coin values of an asset for a particular account p, it holds that the balance of the account is positive. Positive for PADL is defined to be within some range $\positive$. 
Proof of asset is a proof for the language $\lang_{\relation_{\asset}} := \{(\commitment_{1,p},\commitment_{2,p},\cdots, \commitment_{|\pacttxs|+1,p}, \positive) : 
\sum_{i=1}^{|\pacttxs|+1}{\pactcoinvalue_i} < \positive \textnormal{ where }g^{\pactcoinvalue_i}h^{\coinrand_i} = \commitment_i \}$.
Proof of asset for some participant p is instantiated by first asserting that a complementary commitment $\commitment'$ has the same coin value as the aggregated commitment coin via proof of equivalence whereby the aggregated commitment coin is obtained by multiplying all the commitment coins $\commitment^{}_{\Pi}:= \Pi_{t=0}^{|\pacttxs|+1} \commitment_{t,p}$\footnote{Alternatively, by multiplying all the individual complementary commitment coins which is a re-committed commitment coin.}. Then, a NIZK range proof for Pedersen commitment such as Bulletproofs \cite{SP:BBBPWM18} is used to assert that the asset balance contained in the complementary commitment  $\commitment'$ is positive.
\newline
Note that $\pactproofs^{\consistency}$ is instantiated directly from the generalized special honest-verifier proof of knowledge ($\zkpok$) $\Sigma$-protocol for the preimage of a group homomorphism by Maurer \cite{AFRICACRYPT:Maurer09}. The aforementioned theorem and the instantiation are shown in the Appendix \ref{appendix:security}. Additionally, interactive proofs are transformed into non-interactive zero-knowledge proofs (NIZK) using the Fiat-Shamir transform \cite{C:FiaSha86} in the random oracle model and the transaction identifier is a part of the statement for all proofs. 
\begin{table}[htb!]
    \scriptsize
    \centering
    \begin{subfigure}[b]{\textwidth}
        \centering
        \newcolumntype{R}{>{\centering\arraybackslash$}X<{$}}
        \begin{tabularx}{\linewidth}{R | R | R}
            \makecell[l]{
            \underline{\boldsymbol{\pactsetup\leftbrc\secparam\rightbrc}} \\
            (g,h) \gets \commitgen(\secparam) \\
            \publicpar := (g,h) \\
            \ret\publicpar 
            \\
            \\
            }
            &
            \makecell[l]{
            \underline{\boldsymbol{\pactkeygen\leftbrc\secparam\rightbrc}} \\
            (g,h) \gets \publicpar \\
            \pactsk \sample \grouporder \\
            \pactpk: = h^{\pactsk} \\
            \ret \pactsk,\pactpk
            \\
            }  
            &
            \makecell[l]{
            \underline{\boldsymbol{\pactmint\leftbrc \pactcoinvalue \rightbrc}} \\
            (g,h) \gets \publicpar \\
            \coinrand \sample \grouporder \\
            \commitment := \commit(\pactcoinvalue, \coinrand) = g^{\pactcoinvalue} h^{\coinrand} \\
            \ret \commitment,\coinrand
            \\
            }
            \\ \hline
        \end{tabularx}
    \end{subfigure}
    \begin{subfigure}[b]{\textwidth}
        \centering
        \newcolumntype{R}{>{\centering\arraybackslash$}X<{$}}
        \begin{tabularx}{\linewidth}{R | R }
                \makecell[l]{
                    \underline{\boldsymbol{\pactspend\leftbrc \pactvalue, \pacttxs, \pactpks \rightbrc}} \\
                    (g,h) \gets \publicpar \\
                    \pacttx \gets [\ ],  \pactproofs \gets [\ ] \\
                    \textnormal{for each asset } \pactvalue_a :=[\pactcoinvalue_1,\cdots,\pactcoinvalue_{|\pactpks|}] \textnormal{ in } \pactvalue:\\
                    \tab \pacttx_a \gets [], \pactproofs_a \gets [] \\ 
                    \tab \coinrand_0 = 0 \\
                    \tab \textnormal{for each participant } p, \textnormal{its } \pactpk_p \textnormal{ and } \pactcoinvalue_p:\\
                    \tab\tab \textnormal{if not last participant } p: \\
                    \tab\tab\tab \commitment,\coinrand := \pactmint(\pactcoinvalue_{p}) \\
                    \tab\tab\tab \coinrand_{0} += \coinrand \\
                    \tab\tab \textnormal{else}: \\
                    \tab\tab\tab \coinrand := - \coinrand_0\\ 
                    \tab\tab\tab \commitment := g^{\pactcoinvalue}h^{\coinrand} \\
                    \tab\tab \token:= \pactpk_{p}^{\coinrand} \\
                    \tab\tab \pactproofs^{\consistency} := \nizkprove_{\lang_{\relation_{\consistency}}}(\crs, (\commitment,\token), (\pactcoinvalue_p,\coinrand)) \\
                    \tab\tab\pactproofs_{p,a} := (\pactproofs^{\consistency}) \\
                    \tab\tab  \pactproofs_a.append(\pactproofs_{p,a}) \\
                    \tab\tab\pacttx_{p,a} := (\commitment, \token) \\ 
                    \tab\tab \pacttx_a.append(\pacttx_{p,a})\\
                    \tab \pacttx.append(\pacttx_a), \pactproofs.append(\pactproofs_a) \\
                    \pacttx := (H(\pacttx),\pacttx) \\ 
                    \dataend \gets \broadcastfunction(\pacttx)\\
                    \textnormal{for each asset $a$ and each participant $p$:}\\
                    \tab \textnormal{parse } (\commitment'_{p,a}, \token'_{p,a},\pactproofs^{\asset^{\star}}_{p,a}, \pactproofs^{\consistency'}_{p,a}, \pactproofs^{\equivalence}_{p,a}) \textnormal{ from } \dataend \\ 
                    \tab \textnormal{update } \pactproofs_{p,a} := (\pactproofs^{\asset^{\star}}_{p,a}, \pactproofs^{\consistency}_{p,a}, \pactproofs^{\consistency'}_{p,a}, \pactproofs^{\equivalence}_{p,a}) \textnormal{ in } \pactproofs \\
                    \tab \textnormal{update } \pacttx_{p,a} := (\commitment_{p,a}, \commitment'_{p,a}, \token_{p,a}, \token'_{p,a}) \textnormal{ in } \pacttx\\
                    \ret (\pacttx, \pactproofs)\\
                }
             &
            \makecell[l]{
           \underline{\boldsymbol{\pactverify\leftbrc \pacttx, \pactproofs,\pacttxs,\pactpks \rightbrc }} \\
            (g,h) \gets \publicpar \\
            b \gets 1\\
            \textnormal{for each asset $a$ in $\pacttx$:}\\
            \tab b_{\balance} := (\Pi_{p=1}^{|\pactpks|}\commitment_{p,a} \stackrel{?}{=} 1) \\
            \tab \textnormal{for each party $p$ in $\pacttx_a$:}\\
                \tab\tab (\pactproofs^{\asset^{\star}},\pactproofs^{\consistency}, \pactproofs^{\consistency'},\pactproofs^{\equivalence}) := \pactproofs_{p,a}\\
                \tab\tab(\commitment, \token, \commitment', \token') := \pacttx_{p,a} \\
                \tab\tab b_{\consistency} := \nizkverify_{\lang_{\relation_{\consistency}}}(\crs, (\commitment,\token), \pactproofs^{\consistency}) \\
                \tab\tab b_{\consistency'} := \nizkverify_{\lang_{\relation_{\consistency}}}(\crs, (\commitment',\token'), \pactproofs^{\consistency'}) \\
                \tab\tab b_{\equivalence} := \nizkverify_{\lang_{\relation_{\equivalence}}}(\crs, (\commitment/\commitment',\token/\token'), \pactproofs^{\equivalence}) \\
                 \tab\tab \commitment_{\Pi} := (\Pi_{i=1}^{|\pacttxs|} \commitment_{i,p}) \cdot \commitment \\
                 \tab\tab \commitment'_{\Pi}, \pactproofs^{\asset} := \pactproofs^{\asset^{\star}}\\
                \tab\tab b_{\asset} := \nizkverify_{\lang_{\relation_{\asset}}}(\crs, (\commitment_{\Pi}, \commitment'_{\Pi} ,2^N), \pactproofs^{\asset})\\
                \tab\tab b_{p,a} := b_{\balance} \wedge  b_{\consistency} \wedge  b_{\consistency'} \wedge  b_{\equivalence} \wedge  b_{\asset} \\
                \tab\tab b := b \wedge b_{p,a} \\
            \ret b
            \\
            } \\
           \hline
           \makecell[l]{
            \underline{\boldsymbol{\broadcastfunction(\pacttx)}} \\
                (\pactaux, \pacttx) := \pacttx \\
                \dataend = []\\
                \textnormal{for each participant $p$ and asset $a$:} \\
                \tab \textnormal{parse } \commitment_{p,a}, \token_{p,a} \textnormal{ from } \pacttx \\
                \tab \dataend_{p,a} \gets \fillasset_{p}(\commitment_{p,a},\token_{p,a}, \pactaux) \\
                \tab\dataend.append(\dataend_{p,a}) \\
                \ret \dataend\\
                \\
            \hline \\
            \underline{\boldsymbol{\fillasset(\commitment, \token, \pactaux)}} \\
            (\pactsk_p,\pactpk_p, \pactpolicy_p) := \stat_p \\
            \pactcoinvalue_p := \extract (\commitment, \token, \pactsk_p) \\
            \ret \pactconsent\leftbrc \pactcoinvalue_p, \pactpolicy_p, \pactsk_p, (\commitment, \pactpk_{p}, \pactaux) \rightbrc \\
            }
            &
            \makecell[l]{
            \\
            \underline{\boldsymbol{\pactconsent\leftbrc \pactcoinvalue_p, \pactpolicy_p, \pactsk_p, \pactaux \rightbrc}} \\
             \commitment',\coinrand' := \pactmint(\pactcoinvalue_{p}) \\
            \token':= \pactpk_{p}^{r'} \\
             \pactproofs^{\consistency'} := \nizkprove_{\lang_{\relation_{\consistency}}}(\crs, (\commitment',\token'), (\pactcoinvalue_p,\coinrand')) \\
             \pactproofs^{\equivalence} := \nizkprove_{\lang_{\relation_{\equivalence}}}(\crs, (\commitment/\commitment',\token/\token'), (\pactsk_p)) \\
            \commitment_{\Pi} := (\Pi_{i=1}^{|\pacttxs|} \commitment_{i,p}) \cdot \commitment \\
            \pactcoinvalue_{\sum} :=  (\sum_{i=1}^{|\pacttxs|} \pactcoinvalue_{i,p}) + \pactcoinvalue_p \\
            \commitment'_{\Pi}, \coinrand_{\Pi} := \pactmint(\pactcoinvalue_{\sum}) \\
            \pactproofs^{\asset} := \nizkprove_{\lang_{\relation_{\asset}}}(\crs, (\commitment_{\Pi}, \commitment'_{\Pi}, 2^N), (\pactcoinvalue_{\sum}, \coinrand_{\Pi}, \pactsk_p))\\
            \pactproofs^{\asset^{\star}} := (\commitment'_{\Pi},\pactproofs^{\asset}) \\
            \ret (\commitment',\token',\pactproofs^{\asset^{\star}}, \pactproofs^{\consistency'}, \pactproofs^{\equivalence}) \textnormal{ if }\pactpolicy(\pactcoinvalue_p,\pactaux)=1 \\
            \textnormal{else }\ret \ground 
            \\
            }
            \\
        \end{tabularx}
    \end{subfigure}
    \vspace{0.25cm}
    \caption{The pseudocode of PADL for the algorithms in Definition \ref{definition:pact}. Without loss of generality, we assumed transaction value array in $\pactvalue$ contains entries for all $\pactpks$, otherwise the array is zero-padded in appropriate missing entry. Algorithm $\broadcastfunction$ is executed by broadcaster, $\fillasset_p$ and $\pactconsent_p$ are executed by $p$-participant. }
    \label{table:padl_algo}
    \scriptsize
\end{table}

\subsection{Privacy Preserving Auditing of Confidential Assets}
\label{auditing_section}
Previous works, such as zkLedger~\cite{narula2018zkLedger}, demonstrated the use of the homomorphic property of Pedersen commitments and audit tokens for auditing known assets.
In the context of confidential assets, we can generate privacy-preserving audits that can be used by external auditors, or being used directly as conditions for transactions without sharing the sum of values. For example, it can be used for liquidity and credit risk, or interest rate return payments. 
We show the correctness for proofs of basic auditing of an asset balance, asset liquidity and inter-transactions rate proof without revealing transaction history, thereby  achieving PADL privacy. The auditing ZKP can simply extend the transaction cell structure or being queried separately on the ledger.
\subsubsection{Basic Asset Balance.}
\label{audit_balance}

Auditing any asset balance is done by proving the equivalence between the product of commitments signed by a secret key and the product of tokens as follows. 
Participant $p$ ($prover$), shares with an auditor ($verifier$) the claimed balance $\pactcoinvalue_{p,a}$ for asset $a$.
The $verifier$ calculates product of commitments and product of tokens (for a particular asset and participant),
$\prod_{t\in Txs}\commitment_{t,p,a}/g^{\pactcoinvalue_{p,a}}:= c_1$ and $\prod_{t\in Txs}\token_{t,p,a} := c_2$, respectively.

The $prover$ sends the zero knowledge proof \cite{AFRICACRYPT:Maurer09} for the equivalence, $dlog_{c_1}c_2 = dlog_{h}{\pactpk_p}$, where both right and left terms open to $\pactsk$. The $verifier$ verifies the proof, and also verifies $c_1$ and $c_2$ with the ledger. 
It is also easily inferred that using the basic auditing of the asset sum of transactions values, ratio, variance, and concentrations can be calculated as mentioned previously~\cite{narula2018zkLedger} for non-confidential assets. Next we show proofs that do not share the sum of values.

\subsubsection{Asset Liquidity}
\label{audit_liquidity}

PADL can maintain a large amount of confidential assets. When used in a debit market scenario, these assets can be bonds or other type of loan asset tokens as well as currency tokens. The PADL transaction scheme allows a participant to be audited for credit or liquidity of an asset by generating a `liquidity' proof. This can be done by calculating fractional ratios among different assets for multiple auditors and multiple participants without revealing any further information about the balance of the assets.
The participant proves that given a Ledger, and using NIZK proof that the ratio of its investment in an asset to its total assets value is lower than a desired threshold, as follows:

For a given rational number, $f \in \mathbb{Q}$ and two integers, $D$, $N$ such that $D/N=f$, participant, $p$ proves for index asset, $a^*$ its liquidity that:
\begin{equation*}
    \frac{\sum_{t \in Txs} \pactcoinvalue_{t,p,a^*}}{\sum_{a\in A}\sum_{t \in Txs}\pactcoinvalue_{t,p,a}}<f
\end{equation*}
The $prover$ and $verifier$ calculate from the ledger the two points, 

$\prod_{t \in Txs} \commitment'_{t,p,a^*} := c_1$,
$\prod_{a\in A}\prod_{t \in Txs} \commitment'_{t,p,a} := c_2$, and,
$c_r = c_2^D/c_1^N$.

The $prover$ calculates:
   $ \pactcoinvalue_r=D\sum_{a\in A}\sum_{t \in Txs}\pactcoinvalue_{t,p,a}-N\sum_{t \in Txs} \pactcoinvalue_{t,p,a^*}$.

The $prover$ sends a range-proof, $\pactproofs^{AL}$ with the ratio commit, $c_r$, to prove that $\pactcoinvalue_r$ lies in interval $[0...2^n-1]$.
\textit{verifier} verifies the range-proof, $\pactproofs^{AL}$ and that $c_2^D/c_1^N= c_r$ using the ledger and $(a^*,D,N)$.

\subsubsection{Inter-transactions Rate Proof}
\label{audit_intertransaction}

In decentralized finance applications, transactions for multiple investors may consider fractional interest rates. These rates can be proved as part of the transaction verification process or for auditing purposes using $\Sigma$-protocol proofs between cells in the ledger. This approach can be used to conceal investors' strategies while still being paid coupon payment in a verified manner (as discuss on Sec. \ref{ch:usecases}, or to be used for concentration tractability in auditing, or trading in secondary markets.
For a given rational fraction, $Rate$, participant $p$ proves for asset $a$ that:
\begin{equation*}
    \frac{\sum_{t \in txs_1 \subset T} \pactcoinvalue_{t,p,a}=\Sigma v_1}{\sum_{t \in txs_2 \subset T}\pactcoinvalue_{t,p,a}=\Sigma v_2}=Rate,
\end{equation*}
where $\sum_{t \in txs_1 \subset T} \pactcoinvalue_{t,p,a}=\Sigma v_1$ and $\sum_{t \in txs_2 \subset T}\pactcoinvalue_{t,p,a}=\Sigma v_2$ are the sum of values over two subsets of transactions.
Here the $verifier$ also provides two integers, $D$, $N$ such that $D/N=Rate$.
The $prover$ and $verifer$ calculate from the ledger:
\begin{align*}
\prod_{t\in txs_1} \commitment_{t,p,a} := c_1, \prod_{t\in txs_2} \token_{t,p,a} := \tau_1, \quad  
\prod_{t\in txs_2} \commitment_{t,p,a} := c_2, \prod_{t\in txs_2} \token_{t,p,a} := \tau_2
\end{align*}
and $c = c_1^N \cdot c_2^{-D}$,  and $\tau = \tau_1^N \cdot \tau_2^{-D}$,
and then $prover$ provides the proof of knowledge of a solution ($\pactsk$) that solves the equivalence DLP as follows:

For correctness, $c,\tau$ are shown to be:
\begin{align*}
c=g^{N\Sigma v_1-D\Sigma v_2}h^{N\Sigma r_1-D\Sigma r_2}, \quad \tau=\pactpk^{N\Sigma r_1-D\Sigma r_2}
\end{align*}
where $\Sigma r_1=\sum_{t \in txs_1 \subset T} \coinrand_{t,p,a}$ and $\Sigma r_2=\sum_{t \in txs_2 \subset T} \coinrand_{t,p,a}$. 
If $\sum v_1/ \sum v_2 = Rate$, then $ c = h^{N\Sigma r_1-D\Sigma r_2}$
and $prover$ provides a proof for the equivalence of the DLP in $F_P$, $dlog_{c} \tau \equiv dlog_{c} c^{\pactsk}$.
Otherwise, $c^* = g^{v^*}h^{N\Sigma r_1-D\Sigma r_2}$ and finding a solution for $dlog_{c^*} \tau$ is intractable. 

\subsection{Full Auditing and Customized Traceability}
Our system provides a straightforward way to map auditing as required by regulators or parties. This is done by appending to a cell in a transaction, an additional $\token$, but with the auditing party's public key. The 'settlement' bank use-case (Sec. \ref{ch:usecases}) shows an example of such auditing, but this flexibility becomes even more powerful in a complex map of trust between parties with multiple assets. For example, in the debit market where loans contain multiple assets besides cash assets, and requires trusted parties as banks, custodians, and brokers.

\section{Security and Privacy Evaluation}
\label{sec:eval}
\label{sec:security}
In this section, the proposed multi-asset transaction scheme PADL is analyzed in term of security, privacy and performance metrics.  
PADL should satisfy integrity preserving and anonymity preserving properties.
A transaction scheme for a ledger-based system is considered integrity preserving if it maintains the state of the account balance properly. In $\pact$, it is required that the transaction scheme should never allows spending more than the account balance (double spending in the PADL context means spending more than the account balance), never allows asset creation in a transaction and never allows spending of balance from other accounts without owner endorsement. The aforementioned notions are captured in Definition \ref{def:invariant}. 
Integrity property is defined to preserve the invariant defined in Definition \ref{def:invariant} and is formally defined in the Appendix. 
In contrast to Ring-CT in \cite{ESORICS:SALY17} where the recipient address is known, $\pact$ scheme requires that the recipient and sender are anonymized because recipient address is fixed in $\pact$ unlike stealth-address used in Monero transaction. In the context of PADL, the participant information, asset information and transaction value are anonymized among a anonymity set.
The formal definitions for integrity and anonymity and their proofs can be found in Appendix \ref{appendix:security}. Informally, integrity property preserves invariant. Asset overspending is prevented by the soundness of proof of asset and binding property of commitment scheme. Endorsement forgery is prevented by the discrete log hardness and zero-knowledge and soundness of proof. Transaction imbalance is prevented due to the hardness of computing discrete log of the commitment key $h$. Therefore, the system ensured that no asset is created and endorsements from account owners are required for the transaction to go through. In addition, PADL satisfies anonymity because the commitment hides the transaction value and the proofs generated are zero-knowledge. In addition, the tokens are just randomness. Therefore, the transaction scheme does not leak any information about the sender and recipient.
In essence, PADL which is an instantiation of $\pact$ scheme has the following security and privacy guarantee:

\begin{definition}[Invariant] For the $\pact$ transaction scheme to be integrity preserving, it is required that the following invariant hold:
\label{def:invariant}
    \begin{itemize}
        \item \textbf{I1 Asset balance}: For each party $p$ and each asset $a$ in a transaction $\pacttx$, the spending (negative) amount $\pactcoinvalue$ must be smaller than the current asset value $\abs{\pactcoinvalue}\leq \sum \pactcoinvalue_{i,p, a}$ where $i$ ranges over all transactions before the current transaction.
        \item \textbf{I2 Account Owner Endorsement}: For each account $\pactpk_p$ in a transaction $\pacttx$, the endorser should possess the corresponding account secret $\pactsk_p$.
        \item \textbf{I3 Transaction balance}: For a transaction with some transaction row index $t$ and each asset $a$, it must be the case that the transaction value summed to zero, $\sum_{p=1}^{|\pactpks|}{\pactcoinvalue_{t,p,a}} = 0$.
    \end{itemize}
    \label{definition:invariant}
\end{definition}

\begin{theorem}
Let the underlying proofs system used be zero-knowledge and sound, the commitment scheme used has binding property and is homomorphic, and discrete log is hard, then PADL presented in Table \ref{table:padl_algo} satisfied integrity defined in Definition \ref{definition:exp_balance} at Appendix \ref{appendix:security}.
\label{theo:balance}
\end{theorem}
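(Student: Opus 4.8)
\emph{Overall plan.} The plan is a reduction (game-hopping) argument against the integrity experiment of Definition~\ref{definition:exp_balance}. Suppose for contradiction that some PPT adversary $\adv$ makes that experiment output $1$ with non-negligible probability. Unfolding the experiment, $\adv$ must eventually produce --- relative to the honest ledger state $\pacttxs$ and key list $\pactpks$ it has driven --- a fresh transaction $\pacttx$ with proofs $\pactproofs$ such that $\pactverify$ accepts yet the resulting ledger violates some clause of the invariant of Definition~\ref{def:invariant}. I would case-split on which clause fails first: \textbf{I3} (some asset row of $\pacttx$ fails to sum to zero), \textbf{I2} (some account $\pactpk_p$ appearing in $\pacttx$ is endorsed although its owner never revealed $\pactsk_p$ to $\adv$), or \textbf{I1} (appending $\pacttx$ drives some party's running balance in some asset outside $[0,\positive)$). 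For each case I would convert $\adv$ into an efficient attacker on one of the assumed primitives, and conclude by a union bound so that $\AdvBalance$ is bounded by a sum of negligible terms. In all three reductions the reduction simulates the honest parties for $\adv$; since every honestly produced proof is a Fiat--Shamir $\Sigma$-protocol, zero-knowledge lets the reduction generate those proofs (and the $\pactconsent$ responses) without the underlying secrets, by programming the random oracle, so $\adv$'s view is the real one up to negligible error.

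\emph{Clauses I3 and I2.} For \textbf{I3}, $\pactverify$ accepts only when $b_{\balance}$ holds, i.e.\ $\prod_{p=1}^{|\pactpks|}\commitment_{t,p,a}=1$ for every asset $a$; by the homomorphic property this product equals $g^{\sum_p \pactcoinvalue_{t,p,a}}\,h^{\sum_p \coinrand_{t,p,a}}$, where $\pactcoinvalue_{t,p,a}$ is the plaintext value that $\ExpBalance$ records for that cell. If I3 fails then $\sum_p \pactcoinvalue_{t,p,a}\neq 0$, so the equation exhibits a nontrivial relation $g^{\sum_p \pactcoinvalue_{t,p,a}}=h^{-\sum_p \coinrand_{t,p,a}}$ between the two generators, i.e.\ it yields $\mathrm{dlog}_g h$; a reduction that embeds its discrete-log challenge as $g$ or $h$, runs $\adv$, and reads this relation off the accepted transaction breaks the discrete-log assumption (equivalently, the binding property). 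For \textbf{I2}, acceptance forces, for the offending cell, verifying consistency proofs pinning $\token=\pactpk_p^{\coinrand}$ and $\token'=\pactpk_p^{\coinrand'}$ together with a verifying $\pactproofs^{\equivalence}$ on $(\commitment/\commitment',\token/\token')$; as observed where that proof is introduced, once the tokens are pinned this is precisely a Schnorr proof of knowledge of $\pactsk_p$ with $\pactpk_p=h^{\pactsk_p}$. The reduction guesses which never-corrupted account $\adv$ forges for, embeds its discrete-log challenge as that account's public key $\pactpk^*$, simulates all proofs and endorsements for that account (the experiment itself supplies the plaintext values, so no secret key is needed), and then runs the special-soundness extractor on $\adv$'s $\pactproofs^{\equivalence}$ to recover $\mathrm{dlog}_h\pactpk^*$, again contradicting discrete-log hardness.

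\emph{Clause I1 --- the main obstacle.} This is the delicate case, because the quantity whose non-negativity is at stake is the \emph{aggregate} $\sum_i \pactcoinvalue_{i,p,a}+\pactcoinvalue_{t,p,a}$ over the whole history $\pacttxs$, whereas the $b_{\asset}$ check concerns the statement $(\commitment_{\Pi},\commitment'_{\Pi},\positive)$, with $\commitment_{\Pi}=\bigl(\prod_{i=1}^{|\pacttxs|}\commitment_{i,p,a}\bigr)\cdot\commitment_{t,p,a}$ recomputed by the verifier from the ledger and $\commitment'_{\Pi}$ supplied by the prover alongside $\pactproofs^{\asset}$. I would chain three implications: (i) soundness of $\pactproofs^{\asset}$ forces $\commitment'_{\Pi}$ --- hence, through the linkage built into the proof-of-asset statement together with the per-cell equivalence and consistency proofs, also $\commitment_{\Pi}$ --- to commit to a common value lying in $[0,\positive)$; (ii) the binding property, applied along the ledger, forces that common value to be precisely the recorded aggregate $\sum_i \pactcoinvalue_{i,p,a}+\pactcoinvalue_{t,p,a}$; (iii) hence the post-transaction balance lies in $[0,\positive)$, contradicting I1. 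An I1 violation therefore breaks one of: soundness of $\pactproofs^{\asset}$, soundness of $\pactproofs^{\equivalence}$, or the binding property, each yielding a reduction of the shape above. The hard part is making (i)--(ii) rigorous: one must argue inductively on the length of $\pacttxs$ that at each step the verifier's product $\commitment_{\Pi}$, the prover's re-commitment $\commitment'_{\Pi}$, and the value tracked by $\ExpBalance$ coincide --- with the base case being the minted initial cell produced via $\pactmint$ --- and that a corrupted party's cell cannot desynchronize these three without already violating binding or one of the soundness properties. Assembling the three cases, a union bound shows $\AdvBalance$ is negligible, which is exactly what Definition~\ref{definition:exp_balance} requires.
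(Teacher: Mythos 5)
Your overall architecture matches the paper's proof exactly: the same three-way case split on which invariant clause is violated, with asset overspending reduced to soundness of the range proof in $\pactproofs^{\asset}$ (via binding and the equivalence proof linking $\commitment_{\Pi}$ to $\commitment'_{\Pi}$), endorsement forgery reduced to discrete log by embedding the challenge in an uncorrupted $\pactpk$ and running the Schnorr extractor on $\pactproofs^{\equivalence}$, and transaction imbalance reduced to computing $\mathrm{dlog}_g h$. Your I2 and I1 cases are faithful to the paper's argument, and your remark that the I1 chain (i)--(ii) needs an inductive synchronization of $\commitment_{\Pi}$, $\commitment'_{\Pi}$, and the tracked balance is, if anything, more explicit about the delicate point than the paper itself.

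There is one concrete gap, in your I3 case. You write that the reduction "reads this relation off the accepted transaction" to obtain $\mathrm{dlog}_g h = -\bigl(\sum_p \pactcoinvalue_{t,p,a}\bigr)/\bigl(\sum_p \coinrand_{t,p,a}\bigr)$, and you refer to "the plaintext value that $\ExpBalance$ records for that cell." But for an adversarially constructed transaction the experiment holds only the group elements $\commitment_{t,p,a}$, not their openings; neither the values nor the blinding factors of the adversary's cells are known to the reduction, so there is nothing to read off. The relation between $g$ and $h$ only becomes computable after the reduction \emph{extracts} the witnesses $(\pactcoinvalue^{\ast},\coinrand^{\ast})$ for each cell, which the paper does by rewinding the adversary on every $\pactproofs^{\consistency}$ it produces and invoking the $\Sigma$-protocol extractor. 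You already use exactly this rewinding-plus-extraction technique in your I2 case, so the fix is mechanical, but as written the I3 reduction does not go through.
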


\begin{theorem}
Let the commitment scheme used has hiding property and underlying proof system used is zero-knowledge, then PADL presented in Table \ref{table:padl_algo} satisfied anonymity defined in Definition \ref{definition:exp_anon} at Appendix \ref{appendix:security}.
\label{theo:anonymity}
\end{theorem}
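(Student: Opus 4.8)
\noindent\emph{Proof (sketch).} The plan is to establish anonymity by a sequence of game hops, starting from the anonymity experiment $\ExpAnonFull$ of Definition~\ref{definition:exp_anon}. In that experiment the challenger samples a bit $b$ and, on an adversarially chosen pair of admissible scenarios (transaction value lists together with their participant/asset assignments), runs $\pactspend$ and the $\broadcastfunction$ / $\pactconsent$ calls it induces on the $b$-th scenario, handing the resulting $(\pacttx,\pactproofs)$ to the adversary; the adversary may in addition corrupt accounts and learn their secret keys through the usual oracles, and it must guess $b$. I will argue that the adversary's view can be rendered independent of $b$ through two indistinguishability hops, so that $\AdvAnon(\secpar)$ is negligible in $\secpar$.

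\textbf{Hop 1: simulate all proofs.} First I would replace, for every participant $p$ and asset $a$ appearing in the challenge transaction, each proof $\pactproofs^{\consistency}$, $\pactproofs^{\consistency'}$, $\pactproofs^{\equivalence}$ and $\pactproofs^{\asset}$ by the output of the corresponding zero-knowledge simulator on the same statement, programming the random oracle consistently. A hybrid over these polynomially many proofs bounds the change in advantage by a polynomial multiple of the zero-knowledge distinguishing advantage, hence negligibly; after this hop no proof depends on a secret witness (a committed value, a blinding factor, or a secret key).

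\textbf{Hop 2: randomise the commitments and tokens.} Next, for every \emph{uncorrupted} participant $p$ of the challenge transaction I would replace the commitment $\commitment_{t,p,a} = g^{\pactcoinvalue_{t,p,a}} h^{\coinrand_{t,p,a}}$ and its complement $\commitment'_{t,p,a}$, together with the tokens $\token_{t,p,a} = \pactpk_p^{\coinrand_{t,p,a}}$ and $\token'_{t,p,a}$, by a commitment to the dummy value $0$ under fresh randomness and the matching token --- arranged exactly as in $\pactspend$, with the honest ``last'' party absorbing the sum of the blinding factors, so that the transaction-balance relation $\Pi_p \commitment_{t,p,a} = 1$ is preserved. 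Since the commitment scheme is hiding and, under hardness of discrete log, $\pactpk_p^{\coinrand}$ is a uniform group element that leaks nothing about $\coinrand$, the combined commitment--token object is hiding; a hybrid over all uncorrupted cells --- including the historical commitments that enter the aggregated commitment $\commitment_{\Pi}$ consumed by the asset proof --- bounds the loss by a polynomial multiple of the hiding advantage. Admissibility of the two scenarios guarantees that any cell the adversary can open with a corrupted secret key via $\extract$, and any aggregate balance or $\pactconsent$ output it can thereby recompute, is identical under $b=0$ and $b=1$; such cells are therefore left unchanged.

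In the final game every cell of the challenge transaction is either a dummy commitment/token pair (for uncorrupted parties) or carries a value that coincides in the two scenarios (for corrupted parties), and every proof is simulated; the transcript, and hence the adversary's view, is then independent of $b$, so the adversary guesses correctly with probability exactly $1/2$. Summing the two hops yields that $\AdvAnon(\secpar)$ is at most a polynomial multiple of the zero-knowledge and hiding advantages, which is negligible. I expect the main obstacle to be the bookkeeping around corruption and endorsement: one must pin down the admissibility predicate on the challenge scenarios tightly enough that \emph{every} quantity derivable from a corrupted secret key --- extracted cell values, the per-asset aggregated balances feeding the range proof, and the accept/reject decisions returned by $\pactconsent$ under the relevant policies --- provably agrees across $b$, and one must verify that the random-oracle programming in Hop~1 never collides with an earlier adversarial query. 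Both are routine, but they are where care is required. \hfill$\square$
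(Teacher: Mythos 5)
Your proposal follows essentially the same route as the paper's proof: a two-step game-hop argument in which one hop replaces the challenge values in the uncorrupted cells (charged to commitment hiding, with a hybrid over all $\sum_a |\mathcal{V}_a^\ast|$ modified cells) and the other replaces all four proof types by simulator outputs (charged to zero-knowledge, with a factor of four per cell), after which the transcript is independent of $b$ and the adversary's success probability is exactly $1/2$. The only real difference is the order of the hops --- the paper randomises the committed values first and simulates the proofs second, whereas you simulate first and then switch the commitments; your ordering is the more standard one, since it removes the witnesses from the proofs before the hiding reduction has to embed an externally supplied commitment, and your explicit remarks about the token $\mathsf{pk}^{\mathsf{r}}$ sharing randomness with the commitment and about pinning down the admissibility predicate for corrupted cells make precise points that the paper handles only informally.
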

\noindent\textit{Proofs for both theorem above are provided in Appendix \ref{appendix:security}.} 

\section{Performance Analysis}
\label{implementation}
We present PADL as a comprehensive package for constructing ledgers (link to code\textsuperscript{\textdagger}), both as standalone systems and integrated with smart contracts. The implementation of PADL is designed to be modular, enabling the development of new financial use cases. It employs RUST for cryptographic protocols and primitives to achieve high performance and tested code, but the ledger construction is interfaced and packaged with Python to obtain flexibility and accessibility. 
\vspace{-0.5cm}
\begin{table}[h] 
\begin{center}
\begin{tabular}{llccc}
\hline
 Ledger & Type (section) & Size (bytes) & Time/txn (sec) & Assets/Banks \\ [0.5ex] 
 \hline
  Simple Exchange & Tx (Sec. \ref{simple_exchange_ledger}) & $4,704$ & 0.34 $\pm 0.01$ & 2/2 \\ 
 
  Settlement Trusted Bank & Tx (Sec. \ref{settlement_ledger_trusted})   & $3,726$ & 0.21 $\pm 0.001$ & 1/3 \\
  Bond Market & Tx (Sec. \ref{bond_market_ledger}) & $16,464$ & 0.41 $\pm 0.03$ & 2/7\\
  proofs+commits & Cell (Sec. \ref{methods}) & $1,176$ & - & 1/1 \\
   asset balance & proofs (Sec. \ref{audit_balance}) & $98$ & - & -\\
 inter-transactions & proofs (Sec. \ref{audit_intertransaction}) & $98$ & - & -\\
 asset liquidity & proofs (Sec. \ref{audit_liquidity}) & $688$ & - & -\\
\end{tabular}
\end{center}
\caption{A table with the main experimental use-cases from Sec. \ref{ch:usecases} and individual auditing proofs. It includes the size of the cell, and the size and time of the transaction, and the number of assets/banks. We performed 10 runs for each use-case to get the average Time/txn and the $\pm$ confidence intervals.}
\vspace{-1cm}
\label{table:sizecell}
\end{table}

A performance analysis was conducted for the PADL implementation, alongside the zkLedger package and its default configuration. We summarize the transaction size and speed in PADL, the size of auditing proofs, and provide comparative metrics with zkLedger where feasible. From the results in \cite{narula2018zkLedger}, it is observed that the size of the range proof is $3.9$KB, compared to $1.2$KB for the entire commits and proofs in PADL (range proof for $2^{32}-1$). Besides the utilization of Bulletproofs in PADL, another reason for the reduced size in PADL is related to the Disjunctive proof present in zkLedger.

Table \ref{table:sizecell} also illustrates the execution times for the examples discussed in Sec. \ref{ch:usecases}. For the bond market example involving 2 assets and 7 participants, PADL requires approximately $0.41$ seconds on average per transaction. Auditing proofs primarily involve homomorphic arithmetic on the set of transactions in the ledger and the construction of a $\Sigma$-protocol transcript proof. The results demonstrate the practicality and scalability of the solution within financial systems. Figure \ref{fig:benchm} depicts the performance characteristics of both systems under test. As observed in Fig. \ref{fig:benchm}a, there is a linear increase in execution time (seconds, y-axis) relative to the number of assets (x-axis) for zkLedger when compared to PADL. Due to the asynchronous generation of proof of Asset, PADL exhibits better scalability with an increasing number of banks or assets, whereas zkLedger incurs a computationally costly overhead with significant increments in either parameter.

\begin{figure}[!htbp]
    \centering
    \vspace{-0.5cm}
    \includegraphics[width=0.49\textwidth]{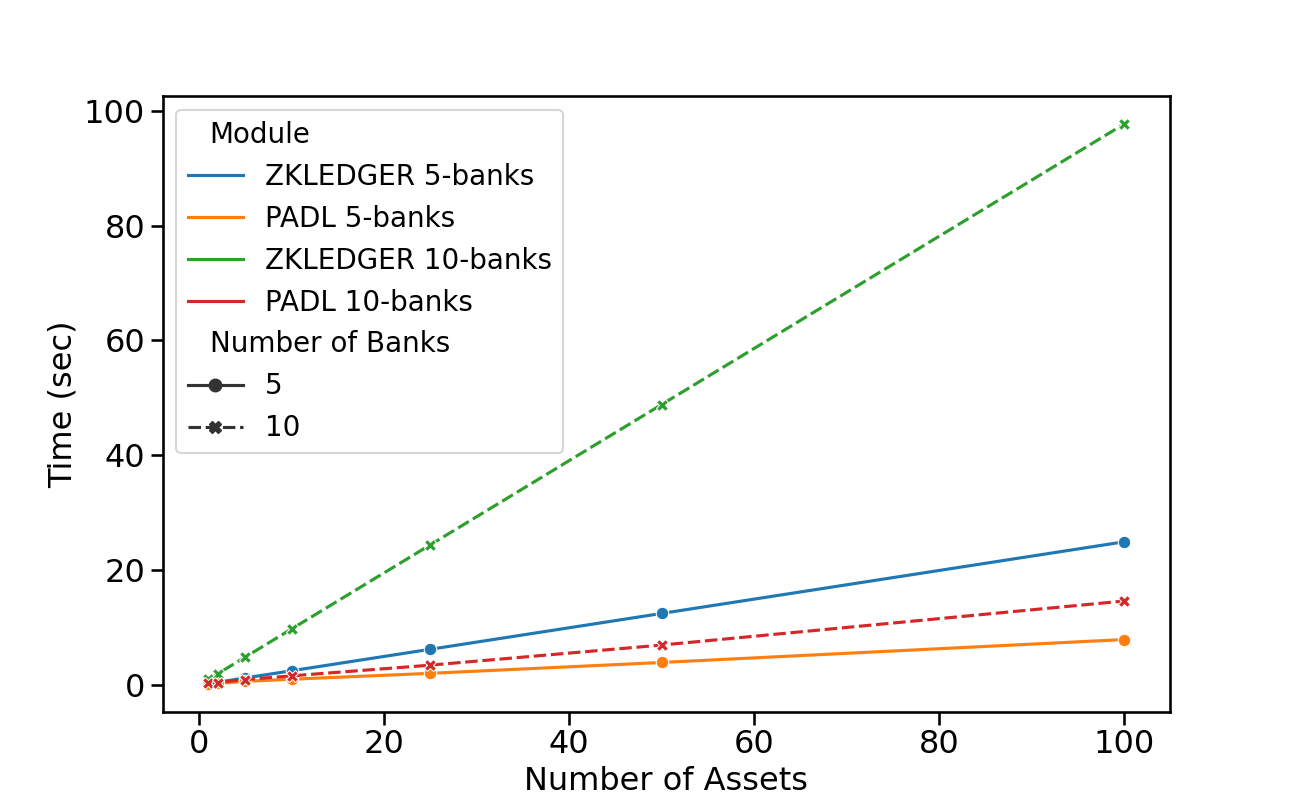} 
    \includegraphics[width=0.49\textwidth]{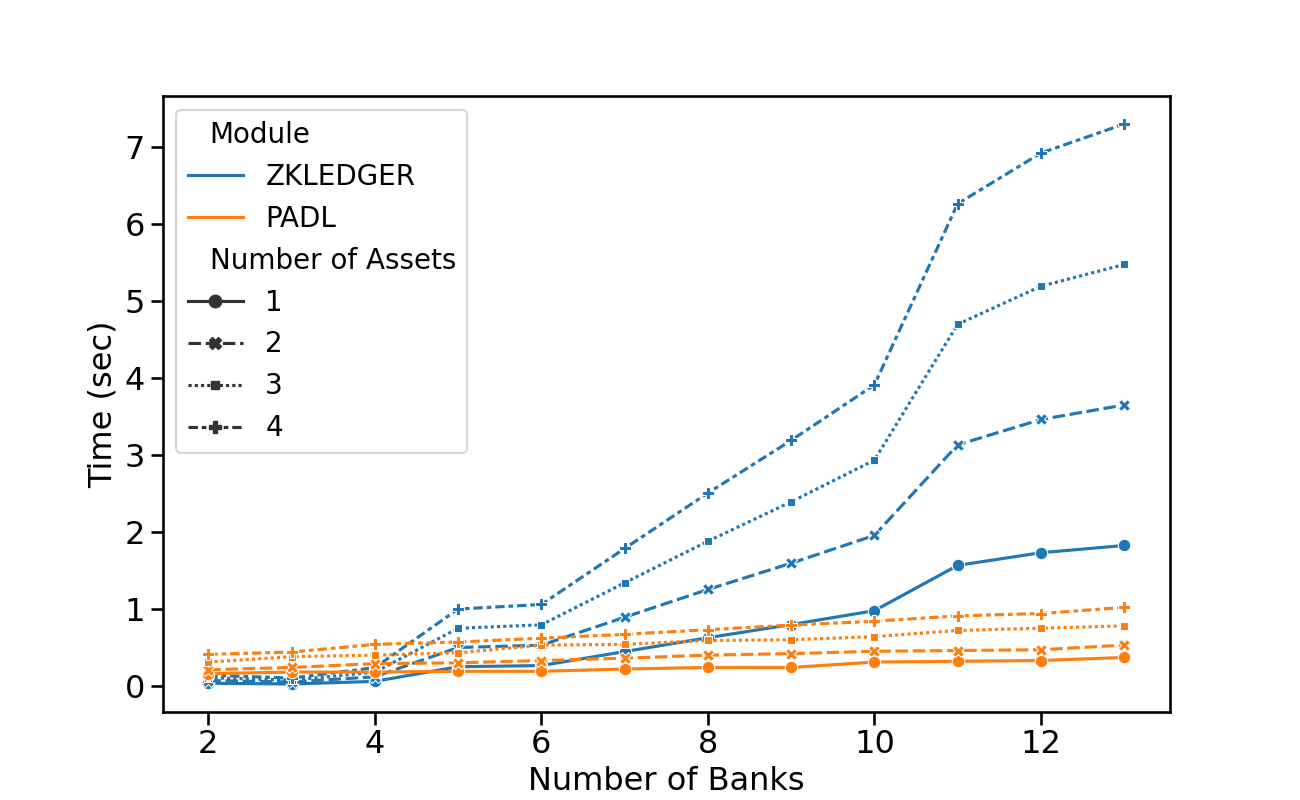} 
    \caption{PADL speed benchmark vs zkLedger algorithm. The y-axis in both figures shows the time (s) it takes for PADL and zkLedger to complete 4 transactions (proving and verification). In subplot \textbf{a} we show over different number of assets in the x-axis and in subplot \textbf{b} over different number of participants-banks. The type of lines and markers show different configurations for the benchmark. The average error (standard deviation in seconds) }
    \label{fig:benchm}
    \vspace{-0.5cm}
\end{figure}

\section{Conclusions and Future Work}
PADL proposes a combined system for a transaction scheme to hide values and balances of assets in distributed ledgers. The ledger focuses on developing financial institutes applications with confidential multi-assets. The proposed scheme with the security evaluation of integrity and anonymity of transaction, shows the general purpose usage of the transaction scheme with privacy preserving auditing and/or `full' auditing. Performance evaluation of the ledger shows that the PADL transaction scheme is scaled well with number of participants due to asynchronous proof generation and batch verification, as well as shorter transaction size when comparing to previous `table' ledgers, such as zkLedger. This is also due to a simplified cryptography scheme used here with complementary commits and equivalence proof. Future direction to expand the framework is to include further set of auditing proofs related to membership and digital identity, which is required for; categories,sanction list membership, as well as additional proofs in the flow of trades of various complex financial markets.

\section{Acknowledgment}
We would like to acknowledge Onyx and Blockchain Engineer team at JPMorgan Chase for their support throughout this project. In particular, we thank Sudhir Upadhyay for his support and advice through-out the work. We also appreciate the assistance of Sai Valiveti, Joe Leung, and Imran Bashir, as well as the support provided by Suresh Shetty. Additionally, we thank Lianna Zhao for her help in the revision process.

\section{Disclaimer}
This paper was prepared for informational purposes by the Global Technology Applied Research Center of JPMorgan Chase \& Co. This paper is not a product of the Research Department of JPMorgan Chase \& Co. or its affiliates. Neither JPMorgan Chase \& Co. nor any of its affiliates makes any explicit or implied representation or warranty and none of them accept any liability in connection with this paper, including, without limitation, with respect to the completeness, accuracy, or reliability of the information contained herein and the potential legal, compliance, tax, or accounting effects thereof. This document is not intended as investment research or investment advice, or as a recommendation, offer, or solicitation for the purchase or sale of any security, financial instrument, financial product or service, or to be used in any way for evaluating the merits of participating in any transaction.



\bibliographystyle{abbrv}
\bibliography{references, Arxiv/abbrev1, Arxiv/crypto, extra_refences_PADL}

\newpage
\appendix

\section*{Appendix}

\section{Notations}
\begin{tabular}{llll}
$\group $& Group &  $\pactproofs^{\asset}$ & Proof of asset  \\ \\
$g$, $h$ & Generators  & $\pactproofs^{\consistency}$ & Proof of consistency \\ \\
$\grouporder$ & Scalar field & $\pactproofs^{\equivalence}$ & Proof of equivalence\\ \\
$p$ & Participant index & $\dataend$ & Endorsement data  \\ \\
$t$ & Transaction index &   $\pactpks$  &  Public key list \\ \\
$a$ & Asset index & $\pactvalue$ & Transaction value list  \\ \\
sender & The participant who executes $\pactspend$ & &  \\ \\
$T$, $\pacttxs$ & All transactions & \\ \\
$A$, $\mathcal{AS}$ & All assets & \\ \\
$P$& All participants & \\ \\
$\cell$ &  Entry in transaction row & & \\ \\
$\pacttx$ & Cells representing a transaction & \\ \\
$txs$ & Set of $tx$ & \\ \\
$\pactcoinvalue$ & Transaction value & \\ \\
$\coinrand$ & Blinding factor / Randomness & \\ \\
$\pactpolicy$ & A predicate policy function & \\ \\
$\commitment$ &  Pedersen Commitment & \\ \\
$\commitment'$ & Complementary Pedersen Commitment & \\ \\
$\token$ &  Token& \\ \\
$\token'$ & Complementary Token & \\ \\
$\pactproofs$ & Set of proofs for a $\pacttx$ & & \\ \\ 
\end{tabular}

\section{Use-Cases Transaction Examples}
\label{appendix:usecases}

In this section we provide an example of construction of transaction commitments by the broadcaster and the setup of the ledger. We show examples of transaction for the experimental use cases discussed in Sec. \ref{ch:usecases}.
\subsection{Bond exchange and Coupon rate}
We assume a simple use case of 2 investors that invest in bonds, with 2-year maturity, 10\% yearly coupon rate, and a par value of 10\$ for each bond unit. In this exchange, there are several actors: custodian is the financial institution that safe keeps the asset of the investors (here USD) and can issue a digital coin/token that is pegged to fiat currency, USD. The bond issuer issues the bonds, and borrows USD. The broker is going to manage the deal, and the exchange between the investors and bond issuer.

Privacy assumptions:
\begin{itemize}
    \item The broker knows the details of the bond deal: How many bonds unit each investor buys.
    \item The custodian has no other knowledge except the amount of money it issues to the investors.
    \item The bond issuer does not know the individual investors, just the sum of USD asset received from many (two) investors.
    \item The investors do not know about other investors values in the $t$.
    \item All participants ($p$) in the ledger know the time that an event of transaction ($t$) happened on ledger, but the content is encrypted.
\end{itemize}
 
In PADL we create a ledger that includes all participants. There is an initialisation where issuers are capable to mint assets to the ledger ($tx0$). Let us assume the bond issuer data: bond $X$, 2-years maturity, and 100 units with value $1,000\$$, with yearly coupon rate of 10\%. Asset name: $X10\$2y10\%$.  The Custodian mint 3000\$, and the Bond issuer with mint 1000\$ and $300X$.  The following transactions that form the ledger do not include all the information that the user will see in the PADL ledger. We have omitted items, such as proofs and complimentary data, and we have kept only the commits ($cm$) and the tokens ($tk$) for clarity.

\begin{table}[h!]
\captionsetup{labelformat=empty}
\begin{center}
\caption{\textbf{2024 Jan}: $tx0$, initialisation}
\begin{tabular}{cccccc}
\hline
 Asset & Custodian & Bond Issuer & Broker & Investor M & Investor N \\ [0.5ex] 
 \hline
 0 & $g^{3000\$}h^{r0a0}$, $pk^{r0a0}$ & $g^{1000\$}h^{r0b0}, pk^{r0b0}$ & $g^{0\$}h^{r0c0}, pk^{r0c0}$ & $g^{0\$}h^{r0d0}, pk^{r0d0}$ & $g^{0\$}h^{r0e0}, pk^{r0e0}$\\
 1 & $g^{0X}h^{r0a1}$, $pk^{r0a1}$ & $g^{300X}h^{r0b1}, pk^{r0b1}$ & $g^{0X}h^{r0c1}, pk^{r0c1}$ & $g^{0X}h^{r0d1}, pk^{r0d1}$ & $g^{0X}h^{r0e1}, pk^{r0e1}$\\
\end{tabular}
\end{center}
\end{table}

The notations above, for $g, h, pk, r$ are explained in Sec. \ref{methods}. The $r$ (blending factors) is new for each $t$ (transaction) and participant, as indicated by their indices.

$tx1$ and $tx2$ are USD-tokens with one custodian for both investors. The custodian moves money to investors, to mint tokens of stable USD as requested by the investors ($2,000\$$ to each investor).

\begin{table}[h!]
\captionsetup{labelformat=empty}
\begin{center}
\caption{\textbf{2024 Jan}: $tx1$}
\begin{tabular}{cccccc}
\hline
 Asset & Custodian & Bond Issuer & Broker & Investor M & Investor N\\ [0.5ex] 
 \hline
 0 & $g^{-2,000\$}h^{r1a}$, $pk^{r1a}$ & $g^{0\$}h^{r1b}, pk^{r1b}$ & $g^{0\$}h^{r1c}, pk^{r1c}$ & $g^{2,000\$}h^{r1d}, pk^{r1d}$ & $g^{0\$}h^{r1e}, pk^{r1e}$ \\
\end{tabular}
\end{center}
\end{table}

\begin{table}[h!]
\captionsetup{labelformat=empty}
\begin{center}
\caption{\textbf{2024 Jan}: $tx2$}
\begin{tabular}{cccccc}
\hline
 Asset & Custodian & Bond Issuer & Broker & Investor M & Investor N\\ [0.5ex] 
 \hline 
 0 & $g^{-2,000\$}h^{r2a}$, $pk^{r2a}$ & $g^{0\$}h^{r2b}, pk^{r2b}$ & $g^{0\$}h^{r2c}, pk^{r2c}$ & $g^{0\$}h^{r2d}, pk^{r2d}$ & $g^{2,000\$}h^{r2e}, pk^{r2e}$ \\
\end{tabular}
\end{center}
\end{table}

Broker broadcast a $tx3$ on PADL, a single exchange of bonds unit with lending to the bond issuer.
\begin{itemize}
    \item Investor $M$ sends $1,000\$$ to the bond issuer and gets $100X$ units, 
    \item Investor $N$ gets $200X$ units for $2,000\$$, the bond issuer sees $3,000\$$ received in the $t$.
\end{itemize}

No one besides the broker can tell what the blending values ($r$’s) are or what are the invested \$/bond values of the other participants. They can only verify their values. For example, the issuer can only infer that the money it received in the $t$ is $3,000\$$, and the bond unit it sent is at a value of $300X$. 

\begin{table}[h!]
\captionsetup{labelformat=empty}
\begin{center}
\caption{\textbf{2024 Jan}: $tx3$, exchange}
\begin{tabular}{cccccc}
\hline
 Asset & Custodian & Bond Issuer & Broker & Investor M & Investor N\\ [0.5ex] 
 \hline
 0 & $g^{0\$}h^{r3a}$, $pk^{r3a}$ & $g^{3,000\$}h^{r3b}, pk^{r3b}$ & $g^{0\$}h^{r3c}, pk^{r3c}$ & $g^{-1,000\$}h^{r3d}, pk^{r3d}$ & $g^{-2,000\$}h^{r3e}, pk^{r3e}$ \\
  
 1 & $g^{0X}h^{r4a}$, $pk^{r4a}$ & $g^{-300X}h^{r4b}, pk^{r4b}$ & $g^{0X}h^{r4c}, pk^{r4c}$ & $g^{100X}h^{r4d}, pk^{r4d}$ & $g^{200X}h^{r4e}, pk^{r4e}$ \\

\end{tabular}
\end{center}
\end{table}

Broker also sends to the issuer of the bond, the future coupons $t$ so the issuer can broadcast these $t$ every year. Only the broker knows the values put for investor $B$ and $M$, the bond issuer knows the total, and can verify the rate but does not know about the individual values of the investors.

\begin{table}[h!]
\captionsetup{labelformat=empty}
\begin{center}
\caption{\textbf{2025 Jan}: $tx4$, Year 1 coupon }
\begin{tabular}{cccccc}
\hline
 Asset & Custodian & Bond Issuer & Broker & Investor M & Investor N \\ [0.5ex] 
 \hline
 0 & $g^{0\$}h^{r5a}$, $pk^{r5a}$ & $g^{-300\$}h^{r5b}, pk^{r5b}$ & $g^{0\$}h^{r5c}, pk^{r5c}$ & $g^{100\$}h^{r5d}, pk^{r5d}$ & $g^{200\$}h^{r5e}, pk^{r5e}$ \\
  
 1 & $g^{0X}h^{r6a}$, $pk^{r6a}$ & $g^{0X}h^{r6b}, pk^{r6b}$ & $g^{0X}h^{r6c}, pk^{r6c}$ & $g^{0}h^{r6d}, pk^{r6d}$ & $g^{0X}h^{r6e}, pk^{r6e}$ \\

\end{tabular}
\end{center}
\end{table}

\begin{table}[h!]
\captionsetup{labelformat=empty}
\begin{center}
\caption{\textbf{2026 - Jan}: $tx5$, Year 2 coupon}
\begin{tabular}{cccccc}
\hline
 Asset & Custodian & Bond Issuer & Broker & Investor M & Investor N \\ [0.5ex] 
 \hline
 0 & $g^{0\$}h^{r7a}$, $pk^{r7a}$ & $g^{-300\$}h^{r7b}, pk^{r7b}$ & $g^{0\$}h^{r7c}, pk^{r7c}$ & $g^{100\$}h^{r7d}, pk^{r7d}$ & $g^{200\$}h^{r7e}, pk^{r7e}$ \\
  
 1 & $g^{0X}h^{r8a}$, $pk^{r8a}$ & $g^{0X}h^{r8b}, pk^{r8b}$ & $g^{0X}h^{r8c}, pk^{r8c}$ & $g^{0}h^{r8d}, pk^{r8d}$ & $g^{0X}h^{r8e}, pk^{r8e}$ \\

\end{tabular}
\end{center}
\end{table}

The bond issuer keeps the encrypted $t$ and will send the first coupon $t$ to the ledger at the end of the year and the second at the end of the second year. The issuer cannot tell how the money is distributed between the investors, only that its total rate, 300\$ is the right amount recorded in these $t$.

\textbf{Proof of the coupon rate}. This is an additional proof to make sure broker and participants are honest in the coupon $t$, even though everyone may be fine with its money received. All participants in addition to the general proofs in PADL $t$, creates a proof of ratio equivalence. Each participant add this prove to the cell when they add the rest of the proofs of the cell.

Finally, in maturity of the bond issuer returns the money and investor return the bonds units. The Broker also gets 0.1\% fees of the total bonds for its hard work. The fees proof can be also added to the cell using the rate proof to ensure the Broker is paid exactly 0.1\% of everyone.

\begin{table}[h!]
\captionsetup{labelformat=empty}
\begin{center}
\caption{\textbf{2026 - maturity}: Transaction Row, $tx6$, return assets and fees}
\begin{tabular}{cccccc}
\hline
 Asset & Custodian & Bond Issuer & Broker & Investor M & Investor N \\ [0.5ex] 
 \hline
 0 & $g^{0\$}h^{r9a}$, $pk^{r9a}$ & $g^{-3,003\$}h^{r9b}, pk^{r9b}$ & $g^{6\$}h^{r9c}, pk^{r9c}$ & $g^{999\$}h^{r9d}, pk^{r9d}$ & $g^{1998\$}h^{r9e}, pk^{r9e}$ \\
  
 1 & $g^{0X}h^{r10a}$, $pk^{r10a}$ & $g^{300X}h^{r10b}, pk^{r10b}$ & $g^{0X}h^{r10c}, pk^{r10c}$ & $g^{-100X}h^{r10d}, pk^{r10d}$ & $g^{-200X}h^{r10e}, pk^{r10e}$ \\

\end{tabular}
\end{center}
\end{table}

\subsection{Settlement Bank Example}

The following transactions are exemplary of Sec. \ref{settlement_ledger_trusted}. Assume we have a settlement bank and two counter-party banks. In total the ledger has 3 participants. The two counter-party banks, Bank A and Bank B, wish to make a payment on assets. For example, Bank A wants to trade money market funds (short term debt) for USD-tokens from Bank B, to cover short term liquidity requirements. They can do this through a trusted participant (settlement bank) that will ensure that both parties have adequate funds to complete the transaction. The settlement party has to be able to verify the balance without having access to the amounts or assets traded. 

Privacy assumptions:
\begin{itemize}
    \item A trusted party ($p$) to act as the settlement bank and be able to audit the transactions ($t$).
    \item The two counterparty banks (Bank A and Bank B) will not share their private keys ($sk$) with the settlement party. 
\end{itemize}

Similarly to the \textbf{Bond Coupon Simple Example}, PADL creates a ledger with an initialisation line ($tx0$).  

\begin{table}[h!]
\captionsetup{labelformat=empty}
\begin{center}
\caption{\textbf{2024 Jan}: $tx0$}
\begin{tabular}{cccc}
\hline
 Asset & Settlement Bank (Issuer) & Bank A & Bank B  \\ [0.5ex] 
 \hline
 0 & $g^{0\$}h^{r0a}$, $pk^{r0a}$ & $g^{0\$}h^{r0b}, pk^{r0b}, pk^{r0b}_{I}$ & $g^{2000\$}h^{r0c}, pk^{r0c}, pk^{r0c}_{I}$ \\
 1 & $g^{0\$}h^{r1a}$, $pk^{r1a}$ & $g^{10MM\$}h^{r1b}, pk^{r1b}, pk^{r1b}_{I}$ & $g^{0\$}h^{r1c}, pk^{r1c}, pk^{r1c}_{I}$ \\
\end{tabular}
\end{center}
\end{table}

Compared to the previous example, we add an additional token ($tk_I)$ that signs $r_{t, a, p}$, by the public key to the tokens of the issuer $I$.  The initial $tx0$ is USD-tokens with Bank B and Bank A starts with market fund $10MM$. 
In the subsequent transaction ($tx1$), Bank A will send $10$ money market funds ($10MM$), with value $200\$$ each, for exchange for $2,000\$$ USD tokens. The settlement bank keeps the money market funds ($10MM$) until Bank A repays back the $2,000\$$ in \textbf{a months' time}.

\begin{table}[h!]
\captionsetup{labelformat=empty}
\begin{center}
\caption{\textbf{2024 Jan}: $tx2$}
\begin{tabular}{cccc}
\hline
 Asset & Settlement Bank & Bank A & Bank B \\ [0.5ex] 
 \hline
 0 & $g^{0\$}h^{r2a}$, $pk^{r2a}$ & $g^{2,000\$}h^{r2b}, pk^{r2b}, pk^{r2b}_{I}$ & $g^{-2,000\$}h^{r2c}, pk^{r2c}, pk^{r2c}_{I}$  \\
  1 & $g^{10MM}h^{r3a}$, $pk^{r3a}$ & $g^{-10MM}h^{r3b}, pk^{r3b}, pk^{r3b}_{I}$ & $g^{0}h^{r3c}, pk^{r3c}, pk^{r3c}_{I}$  \\
\end{tabular}
\end{center}
\end{table}
At the end of the month, Bank A sends to Bank B the $2,000\$$ value and the settlement bank releases the money market funds back to Bank A.

\begin{table}[h!]
\captionsetup{labelformat=empty}
\begin{center}
\caption{\textbf{2024 Feb}: $tx3$}
\begin{tabular}{cccc}
\hline
 Asset & Settlement Bank & Bank A & Bank B (Issuer) \\ [0.5ex] 
 \hline
 0 & $g^{0\$}h^{r4a}$, $pk^{r4a}$ & $g^{-2,000\$}h^{r4b}, pk^{r4b}, pk^{r4b}_{I}$ & $g^{2,000\$}h^{r4c}, pk^{r4c}, pk^{r4c}_{I}$  \\
  1 & $g^{-10MM}h^{r5a}$, $pk^{r5a}$ & $g^{10MM}h^{r5b}, pk^{r5b}, pk^{r5b}_{I}$ & $g^{0MM}h^{r5c}, pk^{r5c}, pk^{r5c}_{I}$  \\
\end{tabular}
\end{center}
\end{table}
In all cells, the settlement bank can extract the value using the commit and its corresponding $tk_I$.

\section{Security and Privacy Analysis}
\label{appendix:security}
The security and privacy guarantees of PADL is stated formally here. First, we recall the standard cryptographic definitions that are omitted in the main body for both the commitment scheme and NIZK. Then, proof is given for both Integrity and Anonymity properties of PADL.

\subsection{Additional Background}
\begin{definition}[Properties of Commitment Scheme]
A non-interactive scheme $(\commitgen, \commit)$ is called a commitment scheme if it satisfies the following properties.

\textbf{Hiding. }Let the advantage $\hidingadv$ of breaking the binding property be the probability defined below. A scheme satisfies hiding if for all PPT adversary $\adv$, there exists a negligible function $\negl$ such that
\begin{align*}
\abs{ \Pr\left[
\begin{array}{cc}
    \multirow{2}{*}{$\adv(\commitment)=b:$}  
    & \commitkey \gets \commitgen(\secparam); (m_0,m_1)\gets \adv(\commitkey)\\
    & b\sample \bin; \commitment \gets \commit(\commitkey, m_b)
\end{array}
\right] - \frac{1}{2}} \leq \negl
\end{align*}

\textbf{Binding. } Let the advantage $\bindingadv$ of breaking the binding property be the probability defined below. A scheme satisfies binding if for all PPT adversary $\adv$, there exists a negligible function $\negl$ such that
\begin{align*}
\Pr\left[
\begin{array}{cc}
    \multirow{2}{*}{$m_0 \neq m_1 \wedge \commit(m_0,r_0) = \commit(m_1,r_1) :$}  
    & \commitkey \gets (\commitgen(\secparam); \\
    & (m_0,r_0,m_1,r_1)\gets \adv(\commitkey)
\end{array}
\right] \leq \negl
\end{align*}
\label{definition:com_properties}
\end{definition}

\begin{definition}[Properties of NIZK]
\label{definition:nizk_property}
A NIZK proof system has the following properties.

\noindent\textbf{Completeness.} For all security parameters $\secpar$, all statements $\stmt \in \lang_{\relation}$, all witness $\wit$ with $\relation(\stmt, \wit)=1$, let $\crs \gets \nizksetup(\secpar)$ and $\nizkproof \gets \nizkprove(\crs,\stmt,\wit)$, it holds that $\nizkverify(\crs,\stmt,\nizkproof)=1$.

\noindent\textbf{Computational Soundness.} The proof system is computationally sound if for all PPT adversaries $\adv$, there exists a negligible function $\negl$ such that the following holds:
\begin{align*}  
\Pr\left[
\begin{array}{c}
\nizkverify(\crs, \stmt, \nizkproof)=1: \\
\crs \gets \nizksetup(\lambda); \\
(\nizkproof, \stmt) \gets \adv(\crs);  \\
\stmt \notin \lang_{\relation}
\end{array}
\right] \leq \negl
\end{align*} 

\noindent\textbf{Zero-Knowledge} The advantage $\sf{ZK}^{\adv}(\secpar)$ of an adversary $\adv$ in  breaking the zero-knowledge property is the probability defined below. The proof system is zero-knowledge if there exists simulator algorithms ($\simulator_1, \simulator_2$) such that for all PPT adversaries $\adv$, there exists a negligible function $\negl$ such that $\sf{ZK}^{\adv}(\secpar) \leq \negl$.
\begin{align*}
\Big|&\Pr\left[\begin{array}{cc}
\multirow{4}{*}{$\adv(\nizkproof^{\ast}) = 1~:$}  & \crs \gets \nizksetup(\secpar); \\
  & (\stmt, \wit) \gets \adv(\crs); \\
  &   \relation(\stmt, \wit) = 1;   \\
   &  \nizkproof^\ast \gets \nizkprove(\crs, \\
   &  \stmt,\wit)
\end{array}
\right]   - \\ 
&\Pr\left[\begin{array}{cc}
\multirow{4}{*}{$\adv(\nizkproof^{\ast}) = 1~:$} &  (\crs, {\sf st}) \gets \simulator_1(\secpar);  \\
 &  (\stmt, \wit) \gets \adv(\crs); \\
  &   \relation(\stmt, \wit) = 1; \\
  &    \nizkproof^\ast \gets \simulator_2(\crs,\stmt,{\sf st}) 
\end{array}\right]\Big| 
\end{align*}
\end{definition}

\subsubsection{Generalized Zero-Knowledge Proof for Group Homomorphism.}
We rely on the following generalized special honest-verifier zero-knowledge proof of knowledge ($\zkpok$) $\Sigma$-protocol for preimage of a group homomorphism by Maurer \cite{AFRICACRYPT:Maurer09}.
\begin{theorem}
\label{theo:maurer}
Let $(G_1, \opone)$ and $(G_2, \optwo)$ be two groups, and $\phi: G_1 \rightarrow G_2$ be a group homomorphism such that $\phi(a \opone b) = \phi(a) \optwo \phi(b)$. Let $\ell \in \mathbb{Z}$, $u \in G_1$, and $z = \phi(x)$ are known such that:

(1) $\forall c_1,c_2 \in \challengespace$ with $c_1 \neq c_2$, gcd($c_1 - c_2,\ell$) = 1

(2) $\phi(u) = z^\ell$

then a $\zkpok$ for a given $z$, of a value $x$ such that $z = \phi(x)$ can be constructed by having the prover samples $ k \sample  G_1$ and computes $t := \phi(k)$, verifier then generates a challenge $ c \sample \challengespace$, prover then outputs $ r := k \opone x^c$, and finally the verifier checks 
$\phi(r) {=^?} t \optwo z^c$.
\end{theorem}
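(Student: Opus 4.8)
The plan is to verify the three defining properties of a $\zkpok$ $\Sigma$-protocol for the relation $\{(z,x) : z = \phi(x)\}$, namely completeness, special soundness, and special honest-verifier zero-knowledge, exploiting throughout that $\phi$ carries the group operation $\opone$ (and hence $\opone$-powers) to $\optwo$ (and $\optwo$-powers).

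\emph{Completeness.} This is immediate from the homomorphism property. For an honest prover holding $x$ with $z = \phi(x)$, after sampling $k \sample G_1$ and setting $t := \phi(k)$, $r := k \opone x^c$, the verifier's check evaluates to $\phi(r) = \phi(k \opone x^c) = \phi(k) \optwo \phi(x)^c = t \optwo z^c$, so it accepts. Here $x^c$ denotes the $c$-fold $\opone$-product of $x$.

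\emph{Special soundness.} This is the step I expect to carry the real content. Suppose the extractor, by rewinding, obtains two accepting transcripts $(t, c_1, r_1)$ and $(t, c_2, r_2)$ with a common first message $t$ and distinct challenges $c_1 \neq c_2$. From the two verification equations $\phi(r_1) = t \optwo z^{c_1}$ and $\phi(r_2) = t \optwo z^{c_2}$, cancelling $t$ gives $\phi(r_1 \opone r_2^{-1}) = z^{c_1 - c_2}$. By hypothesis (1) we have $\gcd(c_1 - c_2, \ell) = 1$, so the extended Euclidean algorithm yields integers $\alpha, \beta$ with $\alpha(c_1 - c_2) + \beta \ell = 1$. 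Define the extracted witness $\tilde{x} := (r_1 \opone r_2^{-1})^{\alpha} \opone u^{\beta}$. Combining the relation just derived with hypothesis (2) that $\phi(u) = z^\ell$,
\[
\phi(\tilde{x}) = \phi(r_1 \opone r_2^{-1})^{\alpha} \optwo \phi(u)^{\beta} = z^{\alpha(c_1 - c_2)} \optwo z^{\beta \ell} = z^{\alpha(c_1 - c_2) + \beta \ell} = z,
\]
so $\tilde{x}$ is a valid preimage of $z$; this gives knowledge soundness with the standard knowledge extractor. The non-routine idea is precisely this Bézout combination of the ``obvious'' relation $\phi(r_1 \opone r_2^{-1}) = z^{c_1-c_2}$ with the auxiliary relation $\phi(u)=z^\ell$ to clear the (possibly non-invertible) factor $c_1 - c_2$; everything else reduces to $\phi$ being a homomorphism.

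\emph{Special honest-verifier zero-knowledge.} On input $z$ and a challenge $c \in \challengespace$, the simulator samples $r \sample G_1$ and sets $t := \phi(r) \optwo z^{-c}$, outputting the transcript $(t, c, r)$; it satisfies the verification equation by construction. It is distributed identically to an honest transcript: in the real execution $k$ is uniform over $G_1$, and since $x^c$ is fixed the map $k \mapsto r = k \opone x^c$ is a bijection of $G_1$, so $r$ is uniform, and then $t = \phi(k) = \phi(r \opone (x^c)^{-1}) = \phi(r) \optwo z^{-c}$, exactly matching the simulator. Hence the two distributions coincide and the protocol is perfectly SHVZK. I would close by noting that the Fiat--Shamir transform (already invoked in the paper) then upgrades this $\Sigma$-protocol to the NIZK used to instantiate $\pactproofs^{\consistency}$.
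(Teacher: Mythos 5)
Your proof is correct and is essentially the classical argument for this theorem: completeness from the homomorphism property, special soundness via the B\'ezout combination of the derived relation $\phi(r_1 \opone r_2^{-1}) = z^{c_1-c_2}$ with the pseudo-preimage relation $\phi(u)=z^{\ell}$ to clear the factor $c_1-c_2$, and perfect special honest-verifier zero-knowledge from the standard simulator. Note that the paper states this result as imported from Maurer \cite{AFRICACRYPT:Maurer09} and provides no proof of its own, so there is no in-paper argument to diverge from; your reconstruction is the intended one.
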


Using the theorem defined above, we can now show that proof of consistency defined in Section \ref{sec:padlcon} is a $\zkpok$ protocol. 

\begin{theorem}
Let $\group$ to be $p$-prime ordered group and $g,h,\pactpk \in \group$. Then, proof of consistency defined in Section \ref{sec:padlcon} is a ZKP protocol for the language $\lang_{\relation_{\consistency}} := \{(\commitment,\token) : \exists \pactcoinvalue,\coinrand\textnormal{ s.t. }\commitment = g^{\pactcoinvalue}h^{\coinrand} \wedge \token = \pactpk^{\coinrand} \}$.
\end{theorem}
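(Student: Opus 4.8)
The plan is to exhibit the proof of consistency as a verbatim instantiation of the generalized $\zkpok$ $\Sigma$-protocol of Theorem~\ref{theo:maurer}, so that completeness, knowledge soundness, and special honest-verifier zero-knowledge are all inherited once that theorem's two side conditions are checked. Concretely, I would set $G_1 := (\grouporder \times \grouporder, +)$ with coordinatewise addition playing the role of $\opone$, set $G_2 := (\group \times \group, \cdot)$ with coordinatewise multiplication playing the role of $\optwo$, and define $\phi\colon G_1 \to G_2$ by $\phi(\pactcoinvalue,\coinrand) := (g^{\pactcoinvalue}h^{\coinrand},\, \pactpk^{\coinrand})$. Then $z := (\commitment,\token)$ is the statement, $x := (\pactcoinvalue,\coinrand)$ is a witness, and membership in $\lang_{\relation_{\consistency}}$ is exactly the assertion that $z$ has a $\phi$-preimage. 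The first step is the routine verification that $\phi$ is a group homomorphism, i.e. $\phi(a \opone b) = \phi(a) \optwo \phi(b)$, which holds because exponentiation in $\group$ is additive in the exponent componentwise.

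\textbf{Checking Maurer's side conditions.} The next step, and the one needing the most care, is to pick the integer $\ell$ and the publicly known element $u \in G_1$ demanded by Theorem~\ref{theo:maurer}. The key observation is that because $\group$ has prime order $p$, every element of $G_2$ satisfies $w^p = 1$; in particular $z^p$ is the identity $(1,1)$. Hence I would take $\ell := p$ and $u := (0,0)$, which makes condition~(2), $\phi(u) = z^{\ell}$, trivially true since $\phi(0,0) = (1,1) = z^{p}$. For condition~(1), taking the challenge space $\challengespace \subseteq \{0,\dots,p-1\}$, any two distinct challenges satisfy $0 < |c_1 - c_2| < p$, and primality of $p$ gives $\gcd(c_1-c_2,\ell) = \gcd(c_1-c_2,p) = 1$; equivalently, $c_1 - c_2$ is a unit in $\grouporder$, which is what actually powers the two-transcript extractor. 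I would note here that the naive choice $\ell = 1$, $u = x$ is not available, since $u$ must be public while the witness is not — this is precisely why one is forced to use the group order as the ``self-reduction'' exponent, and it is the only nontrivial point in the argument.

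\textbf{Matching the concrete protocol and concluding.} With $\phi$, $\ell$, and $u$ fixed, the generic protocol of Theorem~\ref{theo:maurer} reads: the prover samples $k := (u_1,u_2) \sample G_1$ and sends $t := \phi(k) = (g^{u_1}h^{u_2},\, \pactpk^{u_2})$; the verifier sends $c \sample \challengespace$; the prover replies $r := k \opone x^{c} = (u_1 + c\pactcoinvalue,\; u_2 + c\coinrand)$; and the verifier accepts iff $\phi(r) = t \optwo z^{c}$, i.e. (writing $s_1,s_2$ for the coordinates of $r$) iff $g^{s_1}h^{s_2} = t_1\commitment^{c}$ and $\pactpk^{s_2} = t_2\token^{c}$. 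I would then simply observe that this is exactly the transcript $(t_1,t_2,s_1,s_2)$ and verification equations written out in Section~\ref{sec:padlcon}, so completeness, $2$-special soundness (hence knowledge soundness for $\relation_{\consistency}$), and special honest-verifier zero-knowledge transfer directly from Theorem~\ref{theo:maurer}; the Fiat--Shamir step $c := H(h,g,t_1,t_2,\commitment,\token)$ then yields a NIZK in the random-oracle model. The main obstacle, as noted, is purely the bookkeeping of condition~(2) — everything after the choice $\ell = p$, $u = (0,0)$ is mechanical substitution.
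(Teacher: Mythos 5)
Your proposal is correct and follows essentially the same route as the paper's proof: the identical choice of $G_1 = \grouporder \times \grouporder$, $G_2 = \group \times \group$, the homomorphism $\phi(\pactcoinvalue,\coinrand) = (g^{\pactcoinvalue}h^{\coinrand},\pactpk^{\coinrand})$, and the parameters $\ell = p$, $u = (0,0)$ to satisfy Maurer's conditions, followed by Fiat--Shamir. Your version is somewhat more explicit — in particular the check of condition~(1) via primality of $p$ and the matching of the generic transcript to the concrete $(t_1,t_2,s_1,s_2)$ equations — but the argument is the same.
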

\begin{proof}
Let $G_1 = \mathbb{Z}_p \times \mathbb{Z}_p$ with its group operation being component-wise addition and $G_2 = \group \times \group$ with its group operation being component-wise. Then $\phi(\pactcoinvalue,\coinrand) := (g^{\pactcoinvalue} h^{\coinrand}, \pactpk^{\coinrand})$ is a group homomorphism as $\phi(\pactcoinvalue_1,\coinrand_1) \optwo \phi(\pactcoinvalue_2,\coinrand_2) = (g^{\pactcoinvalue_1+\pactcoinvalue_2}h^{\coinrand_1+\coinrand_2}, \pactpk^{\coinrand_1+\coinrand_2}) = \phi(\pactcoinvalue_1 + \pactcoinvalue_2,\coinrand_1+\coinrand_2)$. 
Let $\ell=p$ and $u=(0,0)$, $\forall z \in G_2$, $z^\ell = (1,1) = \phi(u)$ satisfied the requirements for Theorem \ref{theo:maurer}. We can now apply Theorem \ref{theo:maurer} to obtain a ZKP protocol for $\lang_{\relation_{\consistency}}$ and then apply Fiat-Shamir transform to obtain the corresponding NIZK.
\qed
\end{proof}

\subsection{Model and Analysis}
\label{appendix:adv}

The state of PADL ledger is interacted with $\exp$ oracles described in Definition \ref{definition:exp_oracle}. As discussed in Section \ref{sec:eval}, integrity property is derived to maintain the invariant defined in Definition \ref{def:invariant}. Integrity property ensures that for each transaction, it always preserves asset balance, transaction balance and ownership endorsement requirements.

\begin{definition}[Oracles] The adversary will be given oracles 
with state of transactions $\pacttxlist$, account public key list $\pactpklist$, account key pair list $\expaccountlist$, invalid transaction list $\exptxlist$, and corrupted key pair list $\expcorruptlist$ defined as below depending on the experiments:
	\begin{itemize}
	\item{$\expaddacc\leftbrc\rightbrc$:} the oracle executes $(\pactsk,\pactpk) \gets \pactkeygen(\secparam)$ 
 , registers a new account under $\pactpk$ by updating its state with $\pactpklist = \pactpklist \bigcup \{\pactpk\}$, appends $(\pactpk,\pactsk)$ to $\expaccountlist$, and returns address $\pactpk$.
   \item{$\exppost\leftbrc \pacttx, \pactproofs \rightbrc$:} on input a transaction $\pacttx$ and a validity proof $\pactproofs$, if $\pactverify(\allowbreak\pacttx,\allowbreak\pactproofs,\pacttxlist,\allowbreak \pactpklist)=0$ or $(\pacttx,\pactproofs) \in \exptxlist$, the oracle returns failure bit $b=0$ without appending the ledger, else the oracle appends its ledger $\pacttxlist$ with $\pacttx$ and returns success bit $b = 1$. 
    \item{$\expspend\leftbrc\pactvalue\rightbrc$:} on input a transaction amount list $\pactvalue$, 
    executes $(\pacttx, \pactproofs) \gets \pactspend\leftbrc \pactvalue\allowbreak, \allowbreak\pacttxlist\allowbreak ,\allowbreak \pactpklist\rightbrc$, 
    $(b) \gets \exppost(\pacttx, \pactproofs)$, and the oracle returns $\ground$ if $b = 0$, else returns $(\pacttx, \pactproofs)$.
    \item{$\expcorrupt\leftbrc \pactpk \rightbrc$:} on input an account public key $\pactpk$, the oracle retrieves account key pair $(\pactpk,\pactsk) \in \expaccountlist$, appends $(\pactpk)$ to $\expcorruptlist$ and returns $\pactsk$. 
    \item{$\exppolicy(p,\mathcal{F})$:} on input a party identifier $p$, and a predicate function $\mathcal{F}$, the oracle executes $\pactpolicy_p := \mathcal{F}$.
    \item{$\expledger()$}: the oracle returns $(\pactpklist,\pacttxlist)$.
\end{itemize}
\label{definition:exp_oracle}
\end{definition}

\begin{definition}[Integrity] It holds that every PPT adversary $\adv$ has at most negligible advantage in the following experiment, where we define the advantage as $\AdvBalance := \prob{\ExpBalanceFull(\secpar) = 1}$ .
    \procb{$\ExpBalanceFull$}{
        pp \gets \pactsetup(\secparam) 
        \\ \oracle := \{\expaddacc, \exppost, \expspend, \expcorrupt, \exppolicy, \expledger \}\\
        (\pacttx, \pactproofs) \gets \adv^{\oracle}(\publicpar) \\
        (\pactpks, \pacttxs) \gets \expledger() \\
        \textnormal{return 1 if }\pactverify(\pacttx, \pactproofs, \pacttxs, \pactpks) = 1\\
        \textnormal{and one of the following hold for any asset $a$ in the transaction $\pacttx$:} \\
        (i)  \exists p \textnormal{ s.t. } \sum_{i=1}^{|\pacttxs|}{\pactcoinvalue_{i,p,a}} < \abs{\pactcoinvalue_{\pacttx,p,a}} \textnormal{ when $\pactcoinvalue_{\pacttx,p,a}$ is negative },\\
        (ii) \exists p \textnormal{ s.t. } \pactpk_{p} \notin \expcorruptlist, \pactproofs^{\asset}_{\pacttx,p,a}\ and\ \pactproofs^{\equivalence}_{\pacttx,p,a} \textnormal{ are not generated by the challenger,} \\
        (iii) \sum_{i=1}^{|\pactpks|}{\pactcoinvalue_{\pacttx,i,a}} \neq 0.
    }
    \label{definition:exp_balance}
\end{definition}

\begin{theorem}
\label{theorem:balance}


If the underlying proof system is sound and zero-knowledge, and the commitment scheme used is binding, then PADL is integrity preserving.
\end{theorem}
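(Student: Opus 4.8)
\noindent The plan is to assume a PPT adversary $\adv$ that wins $\ExpBalanceFull$ (Definition~\ref{definition:exp_balance}) with non-negligible advantage $\epsilon$ and to contradict one of the hypotheses. Recall that binding of the Pedersen scheme in particular implies hardness of discrete log (a discrete log between $g$ and $h$ immediately yields two openings of the same commitment), so I may freely use both. A winning run requires $\pactverify$ to accept \emph{and} at least one of the events $(i)$, $(ii)$, $(iii)$ of Definition~\ref{definition:exp_balance} to hold, so by a union bound one of them holds together with acceptance with probability at least $\epsilon/3$, and it suffices to rule out each event. As a preliminary step common to all cases, I would run the knowledge extractor of the consistency $\Sigma$-protocols --- which are special-sound proofs of knowledge by the Maurer instantiation around Theorem~\ref{theo:maurer}, so extraction is available in the random-oracle model even though the hypothesis only names plain soundness --- on every cell of the challenge transaction $\pacttx$ and of every transaction already on $\pacttxs$, obtaining openings $(\pactcoinvalue_{i,p,a},\coinrand_{i,p,a})$ with $\commitment_{i,p,a}=g^{\pactcoinvalue_{i,p,a}}h^{\coinrand_{i,p,a}}$; by binding these are unique, so all the sums appearing in $(i)$--$(iii)$ are well-defined.

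\textbf{Event $(iii)$, transaction imbalance.} Acceptance forces $b_{\balance}=1$, i.e.\ $\prod_{p}\commitment_{\pacttx,p,a}=1$, hence by homomorphy and the extracted openings $g^{\sum_p \pactcoinvalue_{\pacttx,p,a}}h^{\sum_p \coinrand_{\pacttx,p,a}}=1=g^{0}h^{0}$. If $\sum_p \pactcoinvalue_{\pacttx,p,a}\neq 0$ these are two distinct openings of the commitment $1$, contradicting binding. \textbf{Event $(i)$, overspending.} Acceptance forces $b_{\equivalence}=1$ and $b_{\asset}=1$ on the statement $(\commitment_{\Pi},\commitment'_{\Pi},2^N)$ with $\commitment_{\Pi}=(\prod_{i}\commitment_{i,p})\cdot\commitment_{\pacttx,p,a}$; by soundness of the equivalence proof $\commitment'_{\Pi}$ and $\commitment_{\Pi}$ commit to the same value, by soundness of the range proof that value lies in $[0,2^N)$, and by homomorphy and the extracted openings it equals the integer $\sum_{i}\pactcoinvalue_{i,p,a}+\pactcoinvalue_{\pacttx,p,a}$. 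Event $(i)$ says this integer is strictly negative, contradicting range-proof soundness, provided the standing parameter regime ($2^N$ and the per-cell magnitudes small relative to the group order) excludes modular wrap-around.

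\textbf{Event $(ii)$, endorsement forgery.} This is the delicate case: $\adv$ outputs, for some $p$ with $\pactpk_p\notin\expcorruptlist$, accepting proofs $\pactproofs^{\equivalence}_{\pacttx,p,a}$ and $\pactproofs^{\asset}_{\pacttx,p,a}$ that the challenger never produced. I would reduce to discrete log: guess the target index $p$ (polynomial loss), plant the challenge instance as $\pactpk_p:=Y=h^{z}$, and simulate all oracle answers. The crucial observation is that $\pactspend$ touches no secret key, and that inside $\fillasset_p$/$\pactconsent_p$ the only steps needing $\pactsk_p$ are (a) the $\extract$ call --- replaced by the transaction value the challenger itself fixed when answering $\expspend$, (b) computing $\token'=\pactpk_p^{\coinrand'}$ --- which only exponentiates the \emph{public} key, and (c) the proofs $\pactproofs^{\equivalence}$ and $\pactproofs^{\asset}$ --- produced instead by the zero-knowledge simulator $(\simulator_1,\simulator_2)$, programming the random oracle. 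If $\adv$ ever queries $\expcorrupt(\pactpk_p)$ we abort, which cannot happen on the winning path by $(ii)$. On a winning run, forking the accepting endorsement extracts a witness containing $\pactsk_p$: rewinding the Schnorr transcript of $\pactproofs^{\equivalence}_{\pacttx,p,a}$ yields $x$ with $\token/\token'=(\commitment/\commitment')^{x}$, and substituting the consistency relations $\token=\pactpk_p^{\coinrand}=Y^{\coinrand}$, $\token'=Y^{\coinrand'}$ together with the extracted openings gives either $x\equiv z$ (the non-degenerate case, which the companion consistency/asset checks force) or, when $\pactcoinvalue\neq\pactcoinvalue'$, a relation $g^{x(\pactcoinvalue-\pactcoinvalue')}=h^{(z-x)(\coinrand-\coinrand')}$ that directly yields $\log_g h$; alternatively $z$ can be recovered from the knowledge-of-secret-key component carried by $\pactproofs^{\asset}$, exactly as in the basic asset audit of Section~\ref{auditing_section}. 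Either way $z$ is output, contradicting hardness of discrete log.

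Since each of $(i)$, $(ii)$, $(iii)$ occurs only with negligible probability, $\epsilon$ is negligible and PADL is integrity preserving. The step I expect to be the main obstacle is Event $(ii)$: making the discrete-log embedding coexist with zero-knowledge simulation of the honest party's endorsements and with rewinding-based extraction in the random-oracle model (so that the forking lemma applies and the simulated and programmed oracle points stay consistent and collision-free), and pinning down precisely which sub-proof of an endorsement certifies identity in the degenerate configurations; a secondary subtlety is the wrap-around-free parameter choice needed to make range-proof soundness bite in Event $(i)$.
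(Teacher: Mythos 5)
Your proposal is correct and follows essentially the same route as the paper's proof: an exhaustive case split over the three winning conditions, reducing overspending to soundness of the equivalence and range proofs, endorsement forgery to discrete log via rewinding the Schnorr-type endorsement proofs while simulating honest endorsements with the zero-knowledge simulator, and transaction imbalance to binding/discrete log via extraction of the consistency-proof openings. Your added care about modular wrap-around in the range proof and about the mechanics of the forking-lemma embedding only makes explicit details the paper leaves implicit.
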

\begin{proof}
Suppose there exists an efficient adversary $\adv$ that can break the integrity property of PADL with non-negligible probability, we exhaust three cases where (i) proof of asset asserts enough account balance even though there is not enough account balance to execute the transaction 
(ii) proof of asset is generated without account ownership (secret key) 
(iii) proof of balance for commitment zero out to group identity despite the committed value does not zero out.

\textbf{Case 1 Asset Overspending: } Given a transaction history $\pacttxs$, current transaction $\pacttx$, and a party $p$, asset balance states that
$\sum_{i=1}^{|\pacttxs|}{\pactcoinvalue_{i,p,a}} + \pactcoinvalue_{\pacttx,p,a} < 2^N $ where $2^N$ is some predetermined positive limit. $\adv$ wins in this case means $\sum_{i=1}^{|\pacttxs|}{\pactcoinvalue_{i,p,a}} + \pactcoinvalue_{\pacttx,p,a} \geq 2^N $ for some party p. Following the soundness of equivalence proof and the binding property of commitment scheme, the value of $ \sum_{i=1}^{|\pacttxs|}\pactcoinvalue_{i,p,a} + \pactcoinvalue_{\pacttx,p,a}$ is committed under $\commitment_{\prod}'$.
We now construct a reduction that breaks the soundness of the proof system used for the range proof in proof of asset. The reduction simulates the whole system honestly. It follows that $\pactproofs^{\asset}_{p}$ is a proof for a false statement. Thus, by returning $((\commitment_{\prod}', 2^N),\pactproofs^{\asset}_{p})$, the reduction will break the soundness property of the proof system for the range proof used in proof of asset which contradicts the assumed soundness of the proof system used.

\textbf{Case 2 Endorsement Forgery: } 
Let $\pactproofs^{\asset}$ and $\pactproofs^{\equivalence}$ be the proofs that are not generated by the challenger and $\pactpk$ be the corresponding public key account. 
Following the discrete log hardness of $h^{\pactsk}$, zero-knowledge property of $\pactproofs^{\asset}$ and $\pactproofs^{\equivalence}$ for previous transactions, if $\pactpk$ is never queried to the key retrieval oracle, then the challenger runs $\adv$ until the same inputs are queried twice to the random oracle with two different proof outputs given by $\adv$ where the proofs are forged for the same public key account satisfying the above criteria.
From here, run the extractor algorithm from the Schnorr discrete log ZKP to obtain a solution for discrete log contradicting the assumed hardness of discrete log.

\textbf{Case 3 Transaction (Im)Balance: } Given a set of commitments $\mathcal{CM} := \{\commitment_1,\commitment_2,...,\commitment_{|\pactpks|}\}$ with the corresponding set of values $\pactvalue := \{\pactcoinvalue_1,\pactcoinvalue_2,...,\pactcoinvalue_{|\pactpks|}\}$ and randomness used $\mathcal{R}:= \{\coinrand_1, \coinrand_2, ..., \coinrand_{|\pactpks|}\}$ such that $\commitment_{\prod} := \prod_{\commitment \in \mathcal{CM}} \commitment \allowdisplaybreaks = \allowdisplaybreaks g^{\pactcoinvalue_1+\pactcoinvalue_2+...+\pactcoinvalue_{|\pactpks|}} \allowdisplaybreaks \cdot  \allowdisplaybreaks h^{\coinrand_1+ \coinrand_2+ ...+ \coinrand_{|\pactpks|}}=g^{0}h^{0}=1$. Let $\pactvalue^{\ast}$ and $\mathcal{R}^{\ast}$ be different sets of values and randomness that one used in the transaction such that $\pactcoinvalue^{\ast}_1+\pactcoinvalue^{\ast}_2+...+\pactcoinvalue^{\ast}_{|\pactpks|} = \pactcoinvalue_{\sum}^{\ast} \neq 0$ and $\coinrand_{\sum}^{\ast} = \sum_{\coinrand^{\ast} \in \mathcal{R}^{\ast}} \coinrand^{\ast}$, then  $g^{\pactcoinvalue^{\ast}_1+\pactcoinvalue^{\ast}_2+...+\pactcoinvalue^{\ast}_{|\pactpks|}} \allowdisplaybreaks \cdot  \allowdisplaybreaks h^{\coinrand^{\ast}_1+ \coinrand^{\ast}_2+ ...+ \coinrand^{\ast}_{|\pactpks|}} = g^{\pactcoinvalue_{\sum}^{\ast}}g^{x ( \coinrand_{\sum}^{\ast} ) } = g^0 = 1 $  which means $x =  -(\pactcoinvalue_{\sum}^{\ast}) / \coinrand_{\sum}^{\ast}$ where  $g^x = h$ and $\coinrand_{\sum}^{\ast} \neq 0$. Note that knowledge of $\pactcoinvalue^{\ast}, \coinrand^{\ast}$ is used to generate $\pactproofs^{\consistency}$. 
The challenger always rewinds $\adv$ once per $\pactproofs^{\consistency}$ generation to obtain two different proof outputs with the same inputs and runs the extractor algorithm to obtain $\pactcoinvalue^{\ast}, \coinrand^{\ast}$ which allows the computation of discrete log when the above mentioned criteria is met. Thus, contradicting the assumed hardness of discrete log assumption.

With the case exhaustion, the adversary has no way to win the game anymore. Therefore, we proved the theorem. \qed
\end{proof}


The (k)-anonymity of $\pact$ prevents an adversary from distinguishing transaction amounts and type of assets for accounts that the adversary does not own among a set of spenders, receivers, and neutral parties. The adversary also will not able to distinguish between spenders, receivers and neutral parties as the transaction amount and its sign are hidden. This property is captured by the Anonymity defined in Definition \ref{definition:exp_anon}. Informally, $\pact$ safeguards the anonymity of all participants (including spenders and recipients) among the selected set. In a deployed system, the adversary would be able to learn some information (such as insufficient balance) about the transaction if the adversary is freely allowed to choose the transaction amount details that an honest party will always accept. This situation is prevented in the model by requiring the adversary to distinguish transactions without the side effect of the transaction being queryable. In the post-challenge query phase, the adversary is not allowed to query the account secret key anymore to exclude the trivial case of distinguishing by extracting the transaction value directly. Trivial cases such as differences in adversary accounts and differences in account policy enforcement for the two given inputs are also excluded.

\begin{definition}[Anonymity]
It holds that every PPT adversary $\adv$ has at most negligible advantage in the following experiment, where we define the advantage as $\AdvAnon := | \prob{\ExpAnonFull(\secpar) = 1} - 1/2 |$. 
    \procb{$\ExpAnonFull$}{
        pp \gets \pactsetup(\secparam)
        \\ \oracle := \{\expaddacc, \exppost, \expspend, \expcorrupt, \exppolicy, \expledger \}\\
        (\pactvalue_0, \pactvalue_1) \gets \adv^{\oracle}(\publicpar) \\
        b \sample \bin \\
        (\pactpks, \pacttxs) \gets \expledger() \\
        \ret \ground\ \textnormal{if}\ \exists p  \textnormal{ s.t. $\pactpk_p \in \expcorruptlist, \pactcoinvalue_p \in \pactvalue_0, \pactcoinvalue'_p \in \pactvalue_1, \pactcoinvalue_{p} \neq \pactcoinvalue_{p}'$}\\
        \ret \ground\ \textnormal{if}\ \exists p \textnormal{ and } \exists {b^*} \in \{0,1\}\ \textnormal{s.t. } \pactcoinvalue_{p} \in \pactvalue_{b^*} \wedge \pactcoinvalue_{p}' \notin \pactvalue_{1-{b^*}}\\
        \ret \ground\ \textnormal{if}\ \exists p\ \textnormal{s.t. } \pactcoinvalue_{p} \in \{ \pactvalue_0 \cup \pactvalue_1\},  \pactpolicy_p(\pactcoinvalue_p,\pactaux_p) \neq 1 \\
        (\pacttx, \pactproofs) \gets \pactspend(\pactvalue_b, \pacttxs, \pactpks) \\
        \exptxlist := \exptxlist \cup (\pacttx, \pactproofs) \\
        b' \gets \adv^{\oracle \setminus \{\expcorrupt\}}(\publicpar, \pacttx,\pactproofs) \\
        \textnormal{return 1 if}\ b' = b\ 
    }
	\label{definition:exp_anon}
\end{definition}

\begin{theorem}
\label{theorem:anon}
Let us define the advantage of an adversary $\adv$ in winning the experiment in Definition~\ref{definition:exp_anon} as:
$$\AdvAnon^{\adv} := | \prob{\ExpAnonFull_{\pact}(\secpar) = 1} - 1/2 |$$

For the PADL defined in Table \ref{table:padl_algo}, we have:
\begin{align*}
\AdvAnon^{\adv} \leq (\sum_{a=1}^{\abs{\assetset}}{|\pactvalue_a^{\ast}|}) \cdot \hidingadvped + 4(\sum_{a=1}^{\abs{\assetset}}{|\pactvalue_a^{\ast}|}) \cdot \sf{ZK}^{\adv}(\secpar)
\end{align*}

\end{theorem}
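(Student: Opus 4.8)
The plan is to bound the adversary's advantage by a standard hybrid argument that gradually replaces the real transaction $\pacttx$ generated from $\pactvalue_b$ with a transaction that is independent of the bit $b$, charging each hybrid step either to the hiding property of the Pedersen commitment or to the zero-knowledge property of the underlying NIZK. First I would set up the hybrids indexed over every commitment coin that appears in the challenge transaction: there are $\sum_{a=1}^{\abs{\assetset}}{|\pactvalue_a^{\ast}|}$ such coins (one per participant per asset, with zero-padding). In hybrid $H_0$ the challenger behaves as in $\ExpAnonFull$ with the true bit; in the final hybrid every commitment is to a fixed dummy value and every proof (consistency, complementary consistency, equivalence, asset) is produced by the simulator $\simulator_2$, so the transcript carries no information about $b$ and the adversary's success probability is exactly $1/2$.

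The key steps, in order: (1) Argue that the post-challenge phase is well-defined — the three $\ret \ground$ guards in Definition~\ref{definition:exp_anon} rule out the trivial distinguishers (corrupted account with differing values, asymmetric value membership, and policy mismatch), so in particular no account whose value differs between $\pactvalue_0$ and $\pactvalue_1$ is ever corrupted, and after the challenge the $\expcorrupt$ oracle is removed entirely. (2) For each of the $\sum_a |\pactvalue_a^{\ast}|$ coins, first switch the two $\Sigma$-protocol/NIZK proofs attached to it — proof of consistency and proof of complementary consistency — and the two further proofs (equivalence and asset) from real to simulated; each switch costs at most $\sf{ZK}^{\adv}(\secpar)$, giving the factor $4(\sum_a |\pactvalue_a^{\ast}|)\cdot \sf{ZK}^{\adv}(\secpar)$. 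Once the proofs no longer use the witnesses, the commitment value is information-theoretically unconstrained by the transcript except through the commitment itself. (3) Then switch each commitment (including, where needed, the token, which in the simulated world is just a uniformly random group element since $\pactpk^{\coinrand}$ with independent $\coinrand$ is uniform) from committing to the $\pactvalue_b$-value to committing to a fixed value; each such switch costs $\hidingadvped$, yielding $(\sum_a |\pactvalue_a^{\ast}|)\cdot\hidingadvped$. Note the proof-of-balance constraint $\prod_p \commitment_{t,p,a}=1$ is maintained automatically because the challenger generates the blinding factors as an additive sharing of zero, exactly as in $\pactspend$; this is why the commitments must be handled after the proofs are simulated, so that no relation among the committed values needs to be preserved through a real proof. (4) Collect the telescoping sum of hybrid gaps to obtain the claimed bound.

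The main obstacle I expect is step (3): handling the token $\token = \pactpk^{\coinrand}$ together with the commitment. The token shares its randomness $\coinrand$ with the commitment $\commitment = g^{\pactcoinvalue}h^{\coinrand}$, so one cannot naively invoke the plain Pedersen hiding game, which knows nothing about $\token$. The fix is to observe that once the proof of consistency is simulated, $(\commitment,\token)$ is not constrained to be a consistent pair by anything the adversary verifies; and in the challenge transaction the reduction itself chooses $\coinrand$, so it can present $\commitment$ to the hiding challenger while sampling $\token = \pactpk^{\coinrand'}$ for a fresh independent $\coinrand'$ — this is distributed identically to a uniform group element, hence indistinguishable from the honestly-formed token under no additional assumption (the hiding game's randomness is what is being hidden). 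Care is also needed at the "last participant" cell, where $\coinrand := -\sum$ of the others is not free; there one keeps the commitment as the quotient $1/\prod_{p\ne \text{last}}\commitment_{t,p,a}$ so the balance relation is preserved syntactically, and the hiding argument is applied to the other $|\pactvalue_a^{\ast}|-1$ cells, the last being determined — which is consistent with the stated coefficient if we are slightly loose, or one argues the last cell's distribution is then fixed given the rest. Beyond this bookkeeping, the argument is routine.
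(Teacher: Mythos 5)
Your proposal is correct in outline and follows essentially the same strategy as the paper's proof --- a game-hopping argument that charges one hiding loss per commitment cell and four zero-knowledge losses per cell --- but you perform the two hops in the opposite order and at a finer, per-coin granularity. The paper's $\pcgame_0\to\pcgame_1$ first rerandomizes the committed values (to fresh values that still sum to zero and respect balance limits, so every statement remains true and real proofs can still be produced), and only then, in $\pcgame_1\to\pcgame_2$, replaces the four proofs by simulated ones. You simulate the proofs first and only then touch the commitments; as you observe, this buys a cleaner hiding reduction, since the reduction never has to emit a real proof for a commitment whose opening it does not know --- a point the paper's writeup glosses over. Your handling of the last-participant cell (keeping that $\commitment$ as the quotient so the balance identity $\prod_p\commitment_{t,p,a}=1$ holds syntactically) is likewise more explicit than the paper, which simply asserts the resampled values sum to zero.

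One step deserves scrutiny in both your argument and the paper's: the token. You correctly identify that $\token=\pactpk^{\coinrand}$ shares its randomness with $\commitment=g^{\pactcoinvalue}h^{\coinrand}$, so the plain Pedersen hiding game does not cover the pair; but your fix --- resampling $\token=\pactpk^{\coinrand'}$ with an independent $\coinrand'$ and claiming indistinguishability ``under no additional assumption'' --- conflates the marginal distribution of $\token$ (indeed uniform) with the joint distribution of $(\commitment,\token)$, which is not preserved: honestly, $\token$ is a deterministic function of the same $\coinrand$, and given $\pactsk$ the pair determines $g^{\pactcoinvalue}$. Decoupling the token from the commitment is a DDH-flavoured step, not a consequence of Pedersen hiding, and the theorem's bound contains no corresponding term. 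The paper does not resolve this either (its $\pcgame_1$ asserts the commitment distribution is unchanged without addressing the correlated token), so this is a shared looseness rather than a defect unique to your proof; to make your reduction airtight you should either keep each commitment--token pair honestly generated for the rerandomized values, as the paper implicitly does, or introduce the assumption needed to break the correlation and carry it into the bound.
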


\noindent Informally, PADL has anonymity as long as the proofs used in the transaction scheme is zero-knowledge and the commitment scheme is hiding.

\begin{proof}
The proof proceeds via a series of game changes. The first game is exactly the original game defined and the last game completely hide the information about $b$. 

\noindent\textbf{$\pcgame_0:$} This is exactly the original game defined in Definition \ref{definition:exp_anon}. \\
{ } \\
\noindent\textbf{$\pcgame_1:$} In the $\pactspend$ algorithm, the challenger replaces all transaction amount values $\pactcoinvalue_i$ in $\pactvalue$ (except values belonging to accounts in corrupted list $\expcorruptlist$) with a random value $\pactcoinvalue_i^{\ast}$ while making sure entries in the modified transaction summed to zero and it does not spend over the account balance limit.
Note that transaction values $\pactcoinvalue$ for adversary corrupted accounts in both $\pactvalue_0$ and $\pactvalue_1$ is the same to prevent trivial cases of distinguishing. 
Let $\pactvalue^{\ast} \subseteq \pactvalue$ be the sublist containing only the modified values and let $\mathcal{AS}$ be the set of assets in the transaction. 
The distribution of the resulting commitments is the same from $\adv$'s point of view. 
The game is indistinguishable from $\pcgame_0$ by the commitment hiding property, it follows that
$$|\Pr[\pcgame_1(\adv)=1] - \Pr[\pcgame_0(\adv)=1]|\leq (\sum_{a=1}^{\abs{\assetset}}{|\pactvalue_a^{\ast}|}) \cdot \hidingadvped$$
\noindent\textbf{$\pcgame_2:$} Let ($\simulator_0, \simulator_1$) be the simulator for the zero-knowledge experiment of the underlying proof system, and $|\pactpks|$ be the number of participants. During the challenge phase where two lists of potentially different $\pactvalue_1, \pactvalue_2$ is given by the adversary, instead of computing the proof normally using $\nizkprove_{\lang_{\relation}}(\crs,\stmt,\wit)$, we use simulator $\simulator_2(\crs, \stmt, \st)$ where $(\crs, \st) \gets \simulator_1(\secpar)$. Since all commitments and tokens are generated honestly as well as spending amounts are ensured to be lower than account balance, $\stmt \in \lang$ for all languages used ($\lang_{\relation_{\consistency}},\lang_{\relation_{\consistency'}}, \lang_{\relation_{\equivalence}},\lang_{\relation_{\asset}}$). The game is indistinguishable from $\pcgame_1$ by the zero-knowledge property of the proof system, it follows that
$$|\Pr[\pcgame_2(\adv)=1] - \Pr[\pcgame_1(\adv)=1]|\leq 4(\sum_{a=1}^{\abs{\assetset}}{|\pactvalue_a^{\ast}|}) \cdot \sf{ZK}^{\adv}(\secpar) $$
Note that now $(\pacttx,\pactproofs)$ is generated independent of
$\pactvalue_0, \pactvalue_1$ and secret keys of $\pactpks$. 
Therefore, the spending with $(\pacttx,\pactproofs)$ in this game leaks no information about $b$ and $\Pr[b' = b]= 1/2$.
\\
\\
\qed
\end{proof}

\section{Threat Model}
PADL assumes that participants can be malicious by tying to either steal, hide or modify values in assets or transaction. Participants can also collaborate to collude and hide transaction information from auditors, or try to break the consistency of a distributed ledger. We identify several points for the threat model and discuss their risk and resolutions.  

\subsection{Ledger consistency}
We obtain the state of the ledger with each transaction, as the hash of all commits and tokens of the ledger. In order to verify previous transactions, one can re-calculate the hash of the ledger to verify that transactions are not omitted. It is also reasonable to assume that one cannot generate a new ledger, by `picking' transactions, as participants' proof of assets would fail if a transaction is omitted. Removing last transactions is possible, but for that all participants need to agree on. Though if the ledger is deployed on smart contract, it would also require changing the history of the blockchain service which is down to the security level of the service. 

\subsection{Cryptography Setup}
To ensure that $log_{g}h$ is hard, all parties could contribute to the randomness of $h$. For example, each party can choose $h^{r_p}$ on the curve and the $\sum_p h^{r_p}$ is used as the ledger generator point, with no-trust.

\subsection{Quantum attack}
The current commit scheme is not considered quantum safe, but the combined system can be adapted to PQC primitives. This is intended to be explored in future study, for example by using lattice based cryptography for commitment scheme and $\Sigma$-protocol ZKPs. 

\subsection{Other Attacks Consideration}
A replay attack by copying a transaction is improbable because the state of the ledger used to verify the proofs would be different. Assets in a transaction cannot be separated into individual transactions, as the hash of the transaction intent would be different. Similarly, generated proofs cannot be separated from the transaction it is proved for as the transaction identifier is a part of the statement. In addition, the broadcaster cannot append a transaction by itself without endorsement from account owners, as the proof of assets and complementary commits and tokens would be missing.

\end{document}